\documentclass[sigconf,balance=true]{acmart} \usepackage[utf8]{inputenc}

\usepackage{mathdots}
\usepackage{mathrsfs}

\usepackage{hyperref}
\hypersetup{
	colorlinks=true,     
	citecolor=blue,      
	linkcolor=black,      
}

\usepackage{graphicx}
\usepackage{tikz}
\usepackage[ruled,linesnumbered]{algorithm2e}

\SetKwInOut{Input}{Input}
\SetKwInOut{Output}{Output}
\SetKwInOut{Oracle}{Oracle}

\usepackage{enumitem}

\newtheorem{problem}{Problem}
\newtheorem{theorem}{Theorem}

\newtheorem{proposition}{Proposition}

\newtheorem{lemma}{Lemma}
\newtheorem{corollary}{Corollary}



\newcommand{\savefootnote}[2]{\footnote{\label{#1}#2}}
\newcommand{\repeatfootnote}[1]{\textsuperscript{\ref{#1}}}


\makeatletter
\DeclareRobustCommand\onedot{\futurelet\@let@token\@onedot}
\def\@onedot{\ifx\@let@token.\else.\null\fi\xspace}
\def\ie{\emph{i.e}\onedot} 
\makeatother

\newcommand{\DP}{\partial}
\newcommand{\BK}{\mathbb{K}}

\newcommand{\den}{\mathrm{den}}

\DeclareMathOperator{\res}{Res}
\DeclareMathOperator{\spn}{span}

\DeclareMathOperator{\lc}{lc}

\DeclareMathOperator{\rk}{rk}
\DeclareMathOperator{\lclm}{LCLM}
\DeclareMathOperator{\diag}{Diag}
\DeclareMathOperator{\herm}{herm}

\author{Louis Gaillard} \affiliation{%
  \institution{ENS de Lyon, CNRS, Inria,
    Université Claude Bernard Lyon 1, LIP, UMR 5668 }
  \city{Lyon}
  \country{France}
}

 \title[
A unified approach
for degree bound estimates of linear differential
operators]{A unified approach
  for degree bound estimates \\ of linear differential
  operators}

 \copyrightyear{2025} \acmYear{2025}
\acmConference[ISSAC
 '25]{International Symposium on Symbolic and Algebraic
 Computation}{2025}{Guanajuato, Mexico}
 \acmBooktitle{International Symposium on Symbolic and Algebraic
 Computation (ISSAC '25), 2025, Guanajuato, Mexico}

\begin{abstract}
  
  We identify a common scheme in
  several existing algorithms
  addressing computational problems on
  linear differential equations with polynomial coefficients.
  These algorithms reduce to computing a linear relation between vectors
  obtained as iterates of a simple differential operator known as
  \emph{pseudo-linear map}.

  We focus on establishing precise degree bounds on the output of this
  class of algorithms. It turns out that in all known instances
  (least common left multiple, symmetric product,\dots), the bounds that are
  derived from the linear algebra step using Cramer's rule are pessimistic. The gap with
  the behaviour observed in practice is often of one order of magnitude, and better
  bounds are sometimes known and derived from \emph{ad hoc} methods and independent arguments.
  We propose a unified approach for proving output degree bounds for all instances of
  the class at once.
  The main technical tools come
  from the theory of realisations of matrices of rational functions
  and their determinantal denominators.

\end{abstract}

\begin{document}

\ccsdesc[500]{Computing Methodologies~Algebraic algorithms} \ccsdesc[500]{
Theory of computation~Analysis of algorithms}

\keywords{D-finiteness, Pseudo-linear map, Degree bound}

\maketitle

\section{Introduction}
\label{sec:introduction}

In this work, we study four central algorithmic
problems involving \emph{D-finite}
functions~\cite{stanley1980dfinite,Kauers23},
which are defined by linear differential equations with polynomial
coefficients.
We specifically focus on the Hermite
reduction-based creative telescoping algorithm for the simple
integration of bivariate rational
functions~\cite{bostan2010complexity}, the computation
of a differential equation for an algebraic function~\cite{bostan2007algeqtodiffeq}
and two closure operations on D-finite functions: least common
left multiple~\cite{vanderHoeven2016complexity_skew_arith,bostan2012LCLM}
and symmetric product~\cite{kauers2014bounds_Dfinite_closure,stanley1980dfinite}.
These four problems can all be seen as particular instances
of a large class of algorithmic problems that share the
common property to be
expressible with the following algebraic formulation.
Let $k$ be a field of
characteristic 0 and $\DP_x$ be the usual derivation on the
rational function field $k(x)$. We are given a
sequence $(a_i)_{i\ge 0}$ of $k(x)^n$ defined by the
simple inductive formula:
\begin{align}
  \label{eq:def_ai}
  a_0 \in k[x]^n, \quad a_{i+1} = \left( \DP_x + T \right) \cdot a_i,~ i \ge 0,
\end{align}
where $T \in k(x)^{n \times n}$ is a $k(x)$-linear map. 
The problem is to compute 
the coefficients $\eta_i \in k[x]$ of a minimal
linear relation
$\eta_\rho(x) \cdot a_\rho + \cdots + \eta_0(x) \cdot a_0 = 0$.
In~(\ref{eq:def_ai}), the operator $\theta = \DP_x + T$ is an example of a
\emph{pseudo-linear
  map}~\cite{jacobson1937pseudolineartransformations,bronstein1996introduction}, and
we have $a_i = \theta^i a_0$.
In other words, we consider the following problem.
\begin{problem}
  \label{pb:lin_rel_pseudo_lin_map}
  Given $T \in k(x)^{n \times n}$, $a \in k[x]^n$ and letting
  $\theta = \DP_x + T$ and
  $\rho = \dim\spn_{k(x)}(\theta^i a, i \ge 0) \le n$, find
  $\eta = (\eta_0,\dots, \eta_{\rho}) \in k[x]^{\rho+1} \setminus \{ 0 \}$ such that
  \begin{equation}
    \label{eq:linear_relation_pseudo_linear_map}
    \eta_\rho \cdot \theta^\rho a + \cdots + \eta_1 \cdot \theta a + \eta_0 \cdot a = 0.
  \end{equation}
\end{problem}
\noindent In the case $\rho = n$, \ie $a$ is a cyclic vector
for
$T$~\cite{churchill2002cyclicvectors,bostan2013cyclic,adjamagbo1988cyclic,barkatou1993algo_uncoupling},
the above relation
gives a characteristic polynomial of $\theta$~\cite{amitsur1954diffpoly_divalgebra}.
Problem~\ref{pb:lin_rel_pseudo_lin_map} can be seen as an extension
of the method of Krylov iterates
for computing the characteristic
polynomial~\cite{neiger_pernet_villard_2024_krylov_iterates,keller-gehrig_1985_charpoly}
in the differential context.
As further developed in Section~\ref{sec:applications}, each of the four
problems mentioned above reduces to solving
the linear system~(\ref{eq:linear_relation_pseudo_linear_map})
with a specific choice of $T$ and $a$.




\paragraph*{Hermite reduction-based creative telescoping}

Creative telescoping refers to a family of methods that have proven successful
for symbolic integration and summation.  The general
concept was first introduced by
Zeilberger~\cite{zeilberger1990creativetelesc} and it applies to a
large class of functions.  Here we focus on the simple
integration of a bivariate rational function
$f(x,y) = p /q \in k(x,y) \setminus \{ 0 \}$ with $p$ and $q$ coprime.  
For any integration domain $\omega$, the definite integral
$F(x) = \int_\omega f(x,y) \, dy$ 
is known to be D-finite and the problem consists in computing a linear differential equation
satisfied by $F$.  The heart of the method relies on finding a
\emph{telescoper} for~$f$, namely a linear differential operator
$L \in k[x]\left<\DP_x\right> \setminus \{ 0 \}$ such that
$L(x,\DP_x) \cdot f(x,y) = \DP_y \cdot h(x,y)$ for some rational function
$h$. Under reasonable assumptions on~$h$ and~$\omega$, the operator $L$ is then proved
to annihilate $F$.
Performing Hermite
reduction~\cite{hermite1872integration} on $\DP_x^i(f)$
iteratively enables to compute the coefficients of $L$ as a
solution to
Problem~\ref{pb:lin_rel_pseudo_lin_map}~\cite{bostan2010complexity}.


\paragraph*{Differential equation for an algebraic function}
A power series $\alpha(x)$ over the field $k$ is said to be \emph{algebraic}
if it is a root of a nonzero bivariate polynomial $P(x,y) \in k[x,y]$.
It is known that such an algebraic function is also
D-finite~\cite[p.~287]{abeloeuvres},
and we call \emph{differential resolvent}
an operator annihilating all the roots of $P$.
Cockle's algorithm~\cite{cockle1861transcendental,bostan2007algeqtodiffeq}
computes a differential resolvent
as a linear relation between the images 
of $\alpha^{(i)}$
in the quotient ring $k(x)[y]/(P)$ that is a finite dimensional
vector space over $k(x)$.

\paragraph*{Least common left multiple and symmetric product}
D-finite functions are closed under sums and
products, and the algorithms
derived from the proof are expressible as instances
of Problem~\ref{pb:lin_rel_pseudo_lin_map}~\cite[Thm.~2.3]{stanley1980dfinite}.
Efficient algorithms for these two operations are
needed for manipulating D-finite functions represented by
annihilating operators~\cite{salvy2019survey,salvy1994gfun}. 
For $L_1,\dots,L_s \in k[x]\left<\DP_x\right>$, a minimal
order linear differential operator
that annihilates all the sums (resp.~products)
of solutions of the $L_i$'s is called a
least common left
multiple (LCLM) (resp.~symmetric product).

\vspace{5pt}
Our thesis is that studying Problem~\ref{pb:lin_rel_pseudo_lin_map}
can contribute to a deeper understanding
of the four specific instances mentioned above, since
this common formulation underlines a shared structure between these problems.
Following this idea, we specifically propose a unified approach to prove
better degree bounds on the polynomials $\eta_i$ in~\eqref{eq:linear_relation_pseudo_linear_map}
(or to retrieve the best known
bounds)
for these four instances at once.
Such bounds are useful as they provide estimates on the output size of algorithms
addressing these problems, and thus determine the notion of optimal algorithms in an
algebraic model of complexity. In addition, there are examples
for algorithms whose running time
is dependent on the output
size~\cite{vanderHoeven2016complexity_skew_arith,bostan2007algeqtodiffeq}.
In this setting, proving tight degree bounds is of key importance for
complexity analysis.


A straightforward approach to obtain degree bounds is to consider~$\eta$
as the solution of the linear system defined
in~(\ref{eq:linear_relation_pseudo_linear_map}) and to derive degree
bounds using classical techniques such as Cramer's
rule~\cite{bostan2010complexity,kauers2014bounds_Dfinite_closure}.
By doing so, if $D$ is a degree bound on both the numerator and the
denominator of $T$, one can show that the coefficients of the solution
of Problem~\ref{pb:lin_rel_pseudo_lin_map} have degrees bounded by
$O(n\deg a + n^2 D)$.  The precise result is
stated and proved in Section~\ref{sec:direct_bound}.

Experiments suggest that this bound is reached for \emph{generic}
instances of Problem~\ref{pb:lin_rel_pseudo_lin_map}.  However for
our four instances, in which the matrix $T$ possesses a structure,
the bound above is an overestimation of the actual degrees.
For instance, let us consider the case of the algorithm based on Hermite's reduction
applied to a rational function $f$ with $d_x$ (resp.~$d_y$) a degree bound
in $x$ (resp.~$y$) on both its numerator and denominator.
By applying Cramer's rule in  the linear algebra step
of the algorithm,
the degree of the telescoper is bounded by
$O(d_y^3 d_x)$~\cite[Lem.~18]{bostan2010complexity}.
However, an independent analysis based on the Almkvist-Zeilberger
algorithm for rational functions~\cite{almkvist1990method} yields a better
bound in $O(d_y^2 d_x)$~\cite[Thm.~25]{bostan2010complexity}.
The aim of this
work is to explain the gap for each instance we study.


\paragraph*{Contributions}

We consider representations of a rational matrix
$T \in k(x)^{n \times n}$ of the form
\begin{equation}
  \label{eq:realisation}
  T = W + XM^{-1}Y,
\end{equation}
where $W,X,M,Y$ are polynomial matrices of sizes $n \times n$,
$ n \times m$, $m \times m $, $m \times n$ respectively and $M$ is
non-singular.  Such a representation for $T$ is called a
\emph{realisation}~\cite{coppel1974matratfun} and originates in
control theory.
We focus here on strictly proper rational matrices: a rational matrix
is strictly proper if its limit when $x$ tends to infinity is $0$.  In
the case where $T$ is strictly proper, we have $ \det M \cdot T = N$
with $N$ a polynomial matrix of degree $< \deg (\det M)$.  We
establish a degree bound for the solution of
Problem~\ref{pb:lin_rel_pseudo_lin_map}, depending on $\deg (\det M)$.
%
%
%
%
\begin{theorem}
  \label{thm:bound_realisation}
  Let $T \in k(x)^{n \times n}$ be a strictly proper rational matrix,
  $\theta = \DP_x + T$ and $a \in k[x]^n$.  Suppose we have a
  realisation~(\ref{eq:realisation}) of $T$. Let $\Delta = \det M$,
  $\delta = \deg \Delta$, and $d_a = \deg a$.  Then, there exists a
  solution
  $\eta = (\eta_0,\dots,\eta_\rho) \in k[x]^{\rho +1} \setminus \{ 0
  \}$ of Problem~\ref{pb:lin_rel_pseudo_lin_map} for $(T,a)$ 
  satisfying
  $\deg \eta_i \le \rho d_a + \rho \delta - \left( \rho(\rho+1)/2 -
    i\right)$.
\end{theorem}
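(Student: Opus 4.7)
The plan is to reformulate Problem~\ref{pb:lin_rel_pseudo_lin_map} as a kernel computation for a polynomial matrix of iterates, and then to exploit the realisation to tighten the naive Cramer bound by controlling the determinantal denominator of the iterate matrix.

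First, set $P_i := \Delta^i a_i$; using that $N := \Delta T$ is a polynomial matrix of degree at most $\delta - 1$ (from strict properness), a direct calculation gives the polynomial recurrence
\[
P_{i+1} = \Delta \, \partial_x P_i + (N - i \, \partial_x \Delta) \, P_i, \qquad P_0 = a_0.
\]
Induction yields $P_i \in k[x]^n$ with $\deg P_i \le d_a + i(\delta - 1)$. Thus the iterate matrix $A := [\,a_0 \mid \cdots \mid a_\rho\,] \in k(x)^{n \times (\rho+1)}$ satisfies $A = B \cdot \diag(1, \Delta^{-1}, \ldots, \Delta^{-\rho})$ with $B := [\,P_0 \mid \cdots \mid P_\rho\,]$ polynomial. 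Since $\rho = \dim \spn_{k(x)}(\theta^i a)$, the matrix $A$ has rank $\rho$ and a one-dimensional right kernel. Applying Cramer to any full-row-rank $\rho \times (\rho+1)$ submatrix $B'$ of $B$ yields a polynomial kernel vector of $B$ with $v_k = (-1)^k \det B'_{(k)}$, from which $(\Delta^k v_k)_k$ is a polynomial kernel vector of $A$; the primitive kernel $\eta$ is then obtained by dividing out the GCD of the components $\Delta^k v_k$.

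The main obstacle is bounding this GCD from below. Plugging the estimate for $\deg P_i$ directly into Cramer gives only $\deg(\Delta^k v_k) \le \rho d_a + (\delta - 1)\rho(\rho+1)/2 + k$, which exceeds the claim by the uniform amount $\delta \rho(\rho - 1)/2$. To close the gap one must show that the common factor of the vector $(\Delta^k v_k)_k$ contains $\Delta^{\rho(\rho-1)/2}$; equivalently, that the determinantal denominator of $A$ as a rational matrix divides $\Delta^\rho$ rather than the naive $\Delta^{\rho(\rho+1)/2}$. The strategy is to augment $B$ with the polynomial vectors encoding the internal states $c_i = M^{-1} Y a_i$ --- for instance the polynomial columns obtained from the adjugate $\operatorname{adj}(M) \, Y \, P_i$ --- producing an enlarged polynomial matrix whose $\rho$-minors expose the expected common power of $\Delta$ via a Schur-complement or block-determinant identity.

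Granting this divisibility, a short arithmetic check yields
\[
\deg \eta_k \le k\delta + \rho d_a + (\delta - 1)(\rho(\rho+1)/2 - k) - \delta \cdot \rho(\rho-1)/2 = \rho d_a + \rho \delta - \rho(\rho+1)/2 + k,
\]
which is the claimed bound. The whole proof therefore hinges on the realisation-based determinantal denominator step; this is precisely where the abstract theory of realisations replaces the ad hoc, case-by-case arguments of prior work and delivers the unified viewpoint advertised in the paper.
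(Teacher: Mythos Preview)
Your overall architecture matches the paper's exactly: write $\theta^i a = P_i/\Delta^i$ with $\deg P_i \le d_a + i(\delta-1)$, apply Cramer to the iterate matrix, and observe that the whole proof reduces to showing that every $\rho\times\rho$ minor of $K^*=[\,a\mid\theta a\mid\cdots\mid\theta^\rho a\,]$ can be written with denominator $\Delta^{\rho}$ (equivalently, that the polynomial kernel vector $(\Delta^k v_k)_k$ is divisible by $\Delta^{\rho(\rho-1)/2}$). Your arithmetic closing the gap, once this is granted, is correct.

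The problem is that you do not prove this divisibility; you only announce a ``strategy'' (augment $B$ by the adjugate columns $\operatorname{adj}(M)\,Y\,P_i$ and invoke an unspecified Schur-complement identity) and then write ``Granting this divisibility\ldots''. That is precisely the heart of the theorem, and the sketch you give is not enough to carry it: the columns $\operatorname{adj}(M)\,Y\,P_i$ already have degree of order $i\delta$, and a generic block-determinant expansion does not obviously force $\Delta^{\rho(\rho-1)/2}$ to divide every $\rho$-minor of $B\cdot\diag(1,\Delta,\ldots,\Delta^\rho)$. Something specific to the \emph{pseudo-linear} iteration---not just the fact that each $a_i$ has denominator $\Delta^i$---has to be used, and your proposal does not say what.

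The paper's proof of this step (its Proposition~\ref{prop:det_den_pseudo_krylov_matrix}) is quite different from your sketch. It passes to an irreducible state-space realisation $T=B(xI_m-A)^{-1}C$, proves the closed form $\theta^j(I_n)=B L^{-1}\Lambda_1^j C$ with $\Lambda_i^j=\prod_{\ell=i}^{j-1}[(CB-\ell I_m)L^{-1}]$, and then peels off one factor of $L^{-1}$ at a time via the recursion $\mathcal L_i = (\text{polynomial}) + (CB-iI_m)L^{-1}\mathcal L_{i+1}$, using the multiplicativity of determinantal denominators (Proposition~\ref{prop:det_den_sum_prod}) to conclude $\varphi_\ell(\mathcal L_i)\mid\Delta^{s_r-i}$ by downward induction. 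This is where the strict-properness hypothesis is actually consumed (it kills the $W$ term and makes $(L^{-1})'=-L^{-2}$ the only derivative needed). Your Schur-complement idea may be salvageable, but as written it is a genuine gap: the step you ``grant'' is the theorem.
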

\vspace{-10pt}
The coefficients of the solution of
Problem~\ref{pb:lin_rel_pseudo_lin_map} are thus bounded
by $O(nd_a  + n\delta)$.
If we choose the trivial realisation
$T = (\den \cdot T)(\den \cdot I_n)^{-1}$ where $\den$ is a least common
denominator of the entries of $T$ of degree $D$,
we retrieve the previously discussed bound in $O(n d_a + n^2 D)$, since we have $\delta = nD$.
In order to prove tight degree bounds,
we consider realisations of $T$ minimizing the degree of $\det M$, also known as
\emph{irreducible realisations}~\cite{coppel1974matratfun}.
In the four specific instances we focus on, we exhibit
a small realisation. Each time, the realisation is obtained by interpreting
the map $T$ as carrying out a reduction
modulo a polynomial or a relation.
%
%
%
%
%
We prove Theorem~\ref{thm:bound_realisation} in
Section~\ref{sec:deg_bound}. It follows from
structural properties on the denominators of minors of matrices of
rational functions, recalled in Section~\ref{sec:matrices_rational_functions}.
It is based on a linearisation
of the input matrix $T$ also known as
a \emph{state-space realisation} of $T$~\cite{kailath1980linear_systems}.
Besides, the strict properness assumption in Theorem~\ref{thm:bound_realisation}
seems to be only required because of the technical tools we use in the proofs.
This is at least suggested by our experiments.


In Section~\ref{sec:applications}, Theorem~\ref{thm:bound_realisation} is applied
to the four instances mentioned above.
For each instance, a degree bound is obtained and compared with the best known
bound in the literature. A summary is given in Table~\ref{tab:results_bound_applications}.
For the algorithm based on Hermite's reduction (row \textsf{Hermite}),
the degree bounds depend on $d_x$ and $d_y$ as defined before.
Similarly, for computing the differential resolvent of $P(x,y)$,
the degree bounds are given in terms of $d_x, d_y$ the degrees of
$P$ in $x$ and $y$ respectively (row \textsf{AlgeqtoDiffeq}). For these two applications,
our new bounds are valid under a genericity assumption that is made explicit in
Section~\ref{sec:hermite_red_creative_telesc}.
Finally, for LCLM and symmetric products of~$s$ linear
differential operators of order at most~$r$ and degree at most~$d$
(rows \textsf{LCLM} and \textsf{SymProd}), the bounds are expressed
in terms of the parameters $s, r$ and $d$. The new bounds
are established for operators that do not admit the point at infinity as an irregular
singularity, \ie it
is either ordinary or a regular singularity~\cite[\S20]{poole1936introduction}.
Note that the restrictions on our bounds (genericity, point at infinity not irregular)
are only necessary to fulfil the strict properness
assumption in Theorem~\ref{thm:bound_realisation}. A similar result
 without this assumption would yield bounds that
 hold for arbitrary input in our four instances.
 
\begin{table}
  \renewcommand{\arraystretch}{1.2}
  \begin{minipage}{1.0\linewidth}
    \begin{tabular}{c|cc|c}
    & \multicolumn{2}{c|}{Previous Bound} & New Bound \\
    \hline
    \textsf{Hermite} & $2d_y^2d_x + o(d_y^2d_x)$ &\cite{bostan2010complexity}&
    $2d_y^2d_x + o(d_y^2d_x)$ \savefootnote{genericity}{Under a genericity assumption (see Section~\ref{sec:hermite_red_creative_telesc})}\\
      \hline
    \textsf{AlgeqtoDiffeq} & $4d_y^2d_x + o(d_y^2d_x)$ &\cite{bostan2007algeqtodiffeq}&
    $2d_y^2d_x + o(d_y^2d_x)$ \repeatfootnote{genericity}\\
    \hline
    \textsf{LCLM} & $ds^2r + o(ds^2r)$ &\cite{bostan2012LCLM}&
   $ds^2r + o(ds^2r)$  \savefootnote{fuchs}{Under the
                                                      assumption that $x= \infty$
                                                      is not an irregular singularity}\\
      \hline
    \textsf{SymProd} &$O(dr^{2s})$ &\cite{kauers2014bounds_Dfinite_closure}&
    $O(dr^{2s-1}) $  \repeatfootnote{fuchs}
    \end{tabular}
  \end{minipage}
  \caption{Degree bounds for the four instances}
  \label{tab:results_bound_applications}
  \renewcommand{\arraystretch}{1}
 \end{table}
 \setlength{\textfloatsep}{-1pt}

\section{Results in specific instances}
\label{sec:applications}
The proof of our main result is postponed to Section~\ref{sec:deg_bound}.
For now we focus on its consequences for proving degree bounds in
the problems we have mentioned.
In each of the following instances, we exhibit
a realisation of the resulting matrix $T$.
We then study for which class
of inputs the matrix $T$ is strictly proper so that
Theorem~\ref{thm:bound_realisation} is applicable.

Let $\BK$ be a field and $s \ge 0$. We denote by
 $\BK[y]_{< s}$
 the ring of polynomials in $y$ of degree less than $s$ with coefficients in $\BK$.

\subsection{Hermite reduction for creative telescoping}
\label{sec:hermite_red_creative_telesc}

Let
$f = p/q$ be in $k(x,y) \setminus \{ 0 \}$ with $p$ and $q$ coprime in $k[x,y]$. 
For simplicity, we assume that $q$ is square-free with respect to $y$,
$\deg_y p < \deg_y q = d_y$ and $\deg_x p \le \deg_x q = d_x$.
We also denote $q_y = \DP_y(q)$ and
$q_x = \DP_x(q)$.
The goal is to compute a telescoper $L \in k[x]\left<\DP_x\right> \setminus \{ 0 \}$ for $f$, \ie
such that $L(f)= \DP_y(h)$ for $h \in k(x,y)$.

\subsubsection{Reduction to Problem~\ref{pb:lin_rel_pseudo_lin_map}}
\label{sec:hermite_red_pb1}


For any rational function $g \in k(x,y)$ whose denominator is a power
of $q$, we denote by $\herm(g)$ its Hermite
reduction~\cite{hermite1872integration}, namely the unique polynomial
$r(x,y) \in k(x)[y]$ such that $\deg_y(r) < \deg_y(q)$ and
$g = \DP_y(h) + r/q$ for $h \in k(x,y)$.  Let $\mathcal A$ be the
vector space $k(x)[y]_{< d_y}$.  It is known that $\herm$ defines a
$k(x)$-linear map from $k(x)[y,q^{-1}]$ to $\mathcal A$ and
$\herm(g) = 0$ if and only $g$ is a derivative (wrt.~$y$).  Moreover,
one can show~\cite{bostan2010complexity} that
$\herm(\DP_x\cdot g) = (\DP_x + T) \cdot \herm(g)$ where $T$ is the
$k(x)$-linear map:
\begin{align*}
  T \colon a \in \mathcal A \mapsto  - \herm(q_x a / q^2) \in \mathcal A.
\end{align*}
A telescoper $L$ for $f$ is such that $\herm(L\cdot f) = 0$ and thus by
linearity of $\herm$ it can be seen as a linear relation between
$\herm(\DP_x^i \cdot f)$ for $i \ge 0$.  Besides, by induction on
$i \ge 0$, we obtain $\herm(\DP_x^i \cdot f) = \theta^i(p)$ where $p = \herm(f)$
is the numerator of $f$ and $\theta$ the pseudo-linear map $\DP_x +
T$. Therefore, a minimal telescoper for $f$ can be obtained as the
solution of Problem~\ref{pb:lin_rel_pseudo_lin_map} on input
$(T,p)$. This is essentially the algorithm
in~\cite[Sec.~3.1]{bostan2010complexity}.

\subsubsection{Realisation of $T$}
\label{sec:hermite_realisation}


In the basis $(1,y,\dots,y^{d_y-1})$, we consider $T$
as a matrix with rational function as coefficients, \ie
$T \in k(x)^{d_y \times d_y}$.
\begin{lemma}
  \label{lem:herm_red_XMm1Y}
  There exist polynomial matrices $X,M$ and $Y$, with $M$ invertible and
  $\deg \det M \le 2d_xd_y$ such that $T= XM^{-1}Y$.
\end{lemma}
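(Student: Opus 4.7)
The plan is to realise $T$ via a Sylvester-type lifting of Hermite's reduction of $q_x a / q^2$. The key ingredient is that since $q$ is squarefree in $y$, we have $\gcd(q, q_y) = 1$ in $k(x)[y]$, so the $k(x)$-linear map
\[ \Phi \colon (A, B) \in k(x)[y]_{<d_y}^2 \longmapsto A q + B q_y \in k(x)[y]_{<2d_y} \]
is a bijection between two spaces of dimension $2d_y$ (injectivity: $Aq + Bq_y = 0$ forces $q \mid B q_y$, hence $q \mid B$, hence $B = 0$ by degree, and then $A = 0$). In the monomial basis, $\Phi$ is represented by a $2d_y \times 2d_y$ polynomial matrix $M$ whose columns gather the coefficients of $y^i q$ and $y^i q_y$ for $0 \le i < d_y$. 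Its entries, being coefficients of $q$ and $q_y$, are polynomials of $x$-degree at most $d_x$, so $\det M$ is a nonzero polynomial of $x$-degree at most $2 d_x d_y$.

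To connect $\Phi$ to $T$, for $a \in \mathcal{A}$ one has $\deg_y(q_x a) \le 2d_y - 1$, so $(A, B) := \Phi^{-1}(q_x a)$ is well-defined in $k(x)[y]_{<d_y}^2$. The identity $B q_y / q^2 = -\DP_y(B/q) + B_y / q$ yields
\[ \frac{q_x a}{q^2} = \frac{A + B_y}{q} - \DP_y\!\left(\frac{B}{q}\right), \]
and since $\deg_y A \le d_y - 1$ and $\deg_y B_y \le d_y - 2$, the numerator $A + B_y$ already lies in $\mathcal{A}$. Hence $\herm(q_x a / q^2) = A + B_y$, i.e., $T(a) = -(A + B_y)$. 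The realisation then reads directly off this formula: take $Y \in k[x]^{2d_y \times d_y}$ as the matrix of the polynomial multiplication $a \mapsto q_x a$, and $X \in k[x]^{d_y \times 2d_y}$ as the (constant) matrix encoding $(A, B) \mapsto -(A + B_y)$, giving $T = X M^{-1} Y$ with $\deg \det M \le 2 d_x d_y$.

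The main subtlety is the degree matching $\deg_y(A + B_y) < d_y$ in the second step: it is exactly what guarantees that no further reduction modulo $q$ is needed, so $X$ remains polynomial and no power of $\lc_y(q)$ sneaks into a denominator. Everything else is routine bookkeeping: verifying bijectivity of $\Phi$, writing out $M$, $X$, $Y$ in monomial coordinates, and bounding $\deg_x$ of $\det M$ by the Leibniz formula.
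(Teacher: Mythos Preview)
Your proof is correct. It differs from the paper's in the choice of the linear map $M$: the paper takes $M$ to be the matrix of the full Hermite step $(A,r)\mapsto q\DP_y(A)-q_yA+qr$ (so that $r=\herm(q_xa/q^2)$ is read off by a mere projection $X$), whereas you take $M$ to be the Sylvester map $(A,B)\mapsto Aq+Bq_y$ and push the Hermite bookkeeping into $X\colon(A,B)\mapsto -(A+B_y)$. Both yield a $2d_y\times 2d_y$ polynomial matrix with entries of $x$-degree~$\le d_x$, hence the same bound $\deg\det M\le 2d_xd_y$. Your choice has the pleasant side effect that $\det M=\pm\lc(q)\res_y(q,q_y)$ is visible at once; the paper needs a separate column-operation argument (Lemma~\ref{lem:expresion_Delta_Hermite_red}) to reduce its $\det M$ to exactly this Sylvester determinant. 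Conversely, the paper's $M$ makes the link to the Hermite linear system of~\cite{bostan2010complexity} explicit and lets it cite invertibility directly, while you supply the (standard) injectivity argument for the Sylvester map yourself.
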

\begin{proof}
  By analysing Hermite reduction for $q_x a /q^2$ and its associated
  linear system as
  in~\cite{horowitz1971algorithms,bostan2010complexity}, for all
  $a \in \mathcal A$ we have $q_x a /q^2 = \DP_y (A/q) + r/q$ for
  $A,r \in \mathcal A$ and $r = \herm(q_x a/q^2)$. This can be
  reformulated as
  \begin{align}
    \label{eq:hermite_reduction_system}
    q_xa = q \DP_y(A) - q_y A + qr,
  \end{align}
  just by multiplying the previous equality by $q^2$. Viewing $A$ and
  $r$ as undetermined polynomials in $\mathcal A$, $(A,r)$ is solution
  to the linear system
  \begin{align}
    \label{eq:hermite_reduction_system_matrix}
    \begin{bmatrix}
      M_1 & M_2
    \end{bmatrix}
    \begin{bmatrix}
      A \\ r
    \end{bmatrix} = q_x a,
  \end{align}
  where $M_1,M_2 \in k[x]^{2d_y \times d_y}$ of degree at most $d_x$.
  By~\cite[Lem.~6]{bostan2010complexity}, $M =
  \begin{bmatrix}
    M_1 & M_2
  \end{bmatrix}$ is invertible and by a direct expansion,
  its determinant has degree $ \le 2d_xd_y$.
  Let $Y$ denote the multiplication by $-q_x$, \ie $Y \colon a \mapsto -q_xa$.
  For all $a \in \mathcal A$, we have $Ta = XM^{-1}Y a$ where $X$ is simply the projection according to
  the last $d_y$ coordinates.
\end{proof}

\subsubsection{Strict properness}
\label{sec:hermite_strict_properness}

It remains to study when the matrix $T$ is strictly proper in order to
apply Theorem~\ref{thm:bound_realisation} with the realisation of
Lemma~\ref{lem:herm_red_XMm1Y}.
Let $\Delta = \det M$. Since $T = XM^{-1}Y$ with $X$ of degree 0 and
$Y$ of degree at most $d_x -1$, it is sufficient that
$M^{-1} = O(1/x^{d_x})$ at infinity for $T$ to be strictly
proper. Note that $M^{-1} = 1/\Delta \cdot \hat M$ where $\hat M$ is the
adjugate matrix of $M$. So the entries of $\hat M$ are minors of $M$ of
dimension $2d_y -1$.  Besides, the entries of $M$ have degree at most
$d_x$, thus $\deg \Delta \le 2d_x d_y$ and the entries of $\hat M$ all
have degree at most $(2d_y-1)d_x$. Hence, it suffices that
$\deg \Delta = 2d_x d_y$ to have $M^{-1} = O(1/x^{d_x})$ and $T$
strictly proper.
\begin{lemma}
  \label{lem:expresion_Delta_Hermite_red}
  Up to a sign, $\Delta = \lc(q) \res_y(q,q_y)$, where $\lc(q)$ is the
  leading coefficient of $q$ in $y$ and $\res_y$ denotes the resultant
  wrt.~$y$.
\end{lemma}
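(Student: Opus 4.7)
The plan is to reduce $M$ to a matrix in which the Sylvester structure for $\res_y(q, q_y)$ becomes visible, then expand along a sparse last row. The key preliminary step is a determinant-$1$ change of variables on the domain: substitute $r = B - \partial_y(A)$, so that
\[
  q \partial_y(A) - q_y A + q r \;=\; q B - q_y A.
\]
Because the map $(A, B) \mapsto (A, B - \partial_y(A))$ is block unitriangular with identity diagonal blocks, it has determinant $1$. Hence $\det M = \det M''$, where $M''$ is the matrix, in the bases $(1, y, \dots, y^{d_y - 1})$ of the domain (on each of the two factors) and $(1, y, \dots, y^{2d_y - 1})$ of the codomain, of the map
\[
  \Phi \colon (A, B) \in \mathcal A \times \mathcal A \longmapsto q B - q_y A \in k(x)[y]_{< 2d_y}.
\]

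Next, I would analyse the last row of $M''$, that is, the row indexed by $y^{2d_y - 1}$ in the codomain. For the $A$-block, $-q_y y^i$ has degree $d_y - 1 + i \le 2d_y - 2$, so all $A$-columns vanish on that row. For the $B$-block, $q y^i$ has degree $d_y + i$; only $i = d_y - 1$ contributes to $y^{2d_y - 1}$, with coefficient $\lc(q)$. Thus the last row of $M''$ has a unique nonzero entry $\lc(q)$ in its last column. Cofactor expansion along this row yields
\[
  \det M'' \;=\; \pm \lc(q) \cdot \det M''',
\]
where $M'''$ is the $(2d_y - 1) \times (2d_y - 1)$ minor obtained by deleting that row and column.

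Finally, $M'''$ represents the map $(A, B') \mapsto q B' - q_y A$ on $k(x)[y]_{< d_y} \times k(x)[y]_{< d_y - 1} \to k(x)[y]_{< 2d_y - 1}$ in the same monomial bases. This is, up to column signs and the standard ordering convention, precisely the Sylvester matrix of $q$ and $q_y$ with respect to $y$, whose determinant equals $\res_y(q, q_y)$ up to a sign depending on convention. Combining the two signs gives $\det M = \pm \lc(q) \cdot \res_y(q, q_y)$, as claimed.

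There is no real technical obstacle here; the only care needed is bookkeeping, namely checking that the substitution $r = B - \partial_y A$ really is well defined as a linear automorphism of $\mathcal A \times \mathcal A$ (which follows because $\partial_y$ maps $\mathcal A$ into itself, or more specifically into $k(x)[y]_{< d_y - 1} \subset \mathcal A$) and tracking which signs arise in matching our $(A, B')$-ordered minor to the conventional form of the Sylvester matrix. Since the target statement is only claimed up to sign, the sign bookkeeping is not essential.
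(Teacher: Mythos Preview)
Your proof is correct and follows essentially the same approach as the paper: both eliminate the $q\,\partial_y(A)$ term by a determinant-one operation on the domain (the paper phrases this as column operations using $M_1 = M_2 U - M_3$, which is exactly your change of variables $r = B - \partial_y A$), and then both peel off the $\lc(q)$ factor from the last row to reveal the Sylvester matrix of $q$ and $q_y$. The only difference is presentational.
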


\begin{proof}
  Let $\mathcal B$ be the vector space $k(x)[y]_{< 2d_y}$.  Recall
  that $M$ is the matrix of the system described
  in~(\ref{eq:hermite_reduction_system})
  and~(\ref{eq:hermite_reduction_system_matrix}).  We have $M =
  \begin{bmatrix}
    M_1 & M_2
  \end{bmatrix}$ with $M_2$ the matrix of the multiplication by $q$,
  \ie $M_2 \colon r \in \mathcal A \mapsto qr \in \mathcal B$, and
  $M_1$ the matrix of
  $A \in \mathcal A \mapsto (q\DP_y(A) -q_y A) \in \mathcal B$.
  Therefore, one can write $M_1 = M_2 \cdot U - M_3$ with $M_3$
  the matrix of the multiplication by $q_y$, \ie
  $M_3 \colon A \in \mathcal A \mapsto q_y A \in \mathcal B$,
  and $U$ the matrix of $A \in \mathcal A \mapsto \DP_y(A) \in \mathcal A$.
  Writing in the basis $(1,\dots,y^{d_y-1})$ of $\mathcal A$,
  \begin{align*}
    U = {
    \begin{bmatrix}
      0 & 1 & \\
        & 0 & 2 \\
        &   & \ddots & \ddots \\
    \end{bmatrix}}.
  \end{align*}
  Hence every column of $M_2 \cdot U$ is a multiple of a column of
  $M_2$.  So,
  \begin{align*}
    \Delta = \det
    \begin{bmatrix}
      M_1 & M_2 
    \end{bmatrix} = \det
    \begin{bmatrix}
      -M_3 & M_2
    \end{bmatrix}.
  \end{align*}
  We just proved that, up to a sign, $\Delta$ is the determinant of
  $(A,r) \in \mathcal A^2 \mapsto q_y A + qr \in \mathcal B$.  The
  matrix of the previous map in the bases
  $((1,0), \dots, (y^{d_y-1},0), (0,1), \dots, (0,y^{d_y-1}))$ and
  $(1, \dots, y^{2d_y-1})$ is
  \begin{align*}
    \begin{bmatrix}
      S& * \\ 0& \lc(q)
    \end{bmatrix},
  \end{align*}
  where $S$ is a matrix representing the Sylvester map associated to
  $(q_y,q)$.  Finally, up to a sign, $\Delta = \lc(q) \res_y(q,q_y)$.
\end{proof}

As a consequence, $\Delta$ has maximal degree $2d_xd_y$ if and only if
$\deg(\lc(q)) = d_x$ and $ \res_y(q,q_y)$ has maximal degree
$(2d_y-1)d_x$.

\begin{lemma}
  \label{lem:maximal_degree_resultant}
  Let $q \in k[x,y]$ be square-free with respect to $y$ of degree
  $d_x$ and $d_y$ in $x$ and $y$ respectively.  The resultant
  $\res_y(q,q_y)$ has maximal degree $(2d_y-1)d_x$ iff the coefficient
  of degree $d_x$ in $x$ of $q$, which is a polynomial in $y$, has
  degree $d_y$ and is square-free.
\end{lemma}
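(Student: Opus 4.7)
The plan is to analyse the coefficient of $x^{(2d_y-1)d_x}$ in $\res_y(q,q_y)$ by specialising the Sylvester matrix entrywise to its leading $x$-coefficients. Write $q = \sum_{j=0}^{d_y} a_j(x)\, y^j$ with $a_j \in k[x]$ and $\deg_x a_j \le d_x$, and let $b_j = [x^{d_x}] a_j \in k$. Then $\tilde q(y) := \sum_{j=0}^{d_y} b_j\, y^j \in k[y]$ is precisely the coefficient of $x^{d_x}$ in $q$ viewed as a polynomial in $y$, and the statement to prove reads: $\res_y(q,q_y)$ has $x$-degree $(2d_y-1)d_x$ iff $\deg_y \tilde q = d_y$ and $\tilde q$ is square-free.

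First I would express $\res_y(q,q_y) = \det S(x)$, where $S(x)$ is the $(2d_y-1)\times(2d_y-1)$ Sylvester matrix of $(q,q_y)$ in $y$; its entries are among the $a_j(x)$ and $j\, a_j(x)$, hence polynomials in $x$ of degree at most $d_x$. Expanding the determinant as a signed sum of products of $2d_y-1$ entries, each term has $x$-degree at most $(2d_y-1)d_x$; this recovers the bound already stated in the preceding paragraph of the paper. By the standard leading-coefficient argument for determinants of polynomial matrices, the coefficient of $x^{(2d_y-1)d_x}$ in $\det S(x)$ equals $\det \tilde S$, where $\tilde S$ is the numerical matrix obtained from $S(x)$ by substituting each $a_j(x)$ with $b_j$ (a term of the expansion contributes to the extremal coefficient iff each of its $2d_y-1$ factors attains $x$-degree $d_x$, which is exactly what the substitution records). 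The key observation is that $\tilde S$ is literally the Sylvester matrix attached to $\tilde q$ and $\tilde q_y$ when both are treated with formal degrees $d_y$ and $d_y-1$.

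The equivalence then splits into two cases. If $\deg_y \tilde q < d_y$, i.e., $b_{d_y}=0$, then the only potentially nonzero entries of the first column of $\tilde S$ are $b_{d_y}$ and $d_y\, b_{d_y}$, so this column vanishes and $\det \tilde S = 0$. Otherwise $b_{d_y} \ne 0$, and since $\mathrm{char}(k)=0$, $\tilde q_y$ has degree exactly $d_y-1$; thus $\tilde S$ is the genuine Sylvester matrix of $(\tilde q,\tilde q_y)$, whose determinant is $\res_y(\tilde q,\tilde q_y) = \pm b_{d_y} \disc(\tilde q)$, and this is nonzero iff $\tilde q$ is square-free. Combining both cases yields the claimed equivalence. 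The only mildly delicate point, which is not a real obstacle, is phrasing the leading-coefficient argument cleanly when several $a_j(x)$ have $x$-degree strictly less than $d_x$, so that the corresponding $b_j$ are zero in $\tilde S$; the permutation expansion shows this is handled automatically.
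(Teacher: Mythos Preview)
Your proposal is correct and follows essentially the same approach as the paper: both extract the top $x$-coefficient of the Sylvester matrix of $(q,q_y)$ as a constant matrix $\tilde S$ (the paper's $S_0$), identify it with the Sylvester map of the leading polynomial $\tilde q$ (the paper's $Q$), and conclude that it is nonsingular iff $\tilde q$ has degree $d_y$ and is square-free. The only cosmetic differences are that the paper phrases the degenerate case $b_{d_y}=0$ as non-surjectivity of the Sylvester map rather than vanishing of a column, and does not spell out the discriminant identity.
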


\begin{proof}
  Let $S$ be the Sylvester matrix associated to $(q,q_y)$.  One can
  write $S = x^{d_x} S_0 + O(x^{d_{x}-1})$ with
  $S_0 \in k^{(2d_y-1)\times (2d_y-1)}$. So $\res_y(q,q_y)$ has
  maximal degree $(2d_y-1)d_x$ if and only if $\det S_0 \neq 0$.  Let
  $Q \in k[y]$ be the coefficient of $q$ of degree $d_x$ in $x$. $S_0$
  is the matrix of the map
  $(U,V) \in k[y]_{< d_y -1 } \times k[y]_{< d_y } \mapsto UQ +
  V \DP_y(Q) \in k[y]_{< 2d_y -1}$. For this map to be surjective, it is
  necessary that $\deg Q = d_y$, in which case it is the Sylvester map
  associated to $Q$ and $\DP_y(Q)$.  Hence, $S_0$ is invertible if and only
  if $Q$ has degree $d_y$ and is square-free.
\end{proof}



In summary, if the coefficient of degree $d_x$ in $x$ of $q$ has
degree~$d_y$ and is square-free, then $T$ is strictly proper.

\subsubsection{Degree bound}
\label{sec:Hermite_degree_bound}

\begin{theorem}
  \label{thm:herm_red_generic_bound}
  Let $f(x,y) = p/q \in k(x,y)$ with
  $q$ square-free with respect to $y$,
  $(d_x,d_y) = (\deg_x(q),\deg_y(q))$ and
  $\deg_x p \le d_x$, $\deg_y p < d_y$.
  If the coefficient of degree $d_x$ in $x$ of $q$ has degree
  $d_y$ and is square-free, then there exists a minimal telescoper
  for $f$ of order $r \le d_y$
  and degree at most $rd_x + 2rd_yd_x -1/2 \cdot r(r-1)$.
\end{theorem}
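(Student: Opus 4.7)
The plan is to assemble the three ingredients already prepared in Sections 2.1.1--2.1.3: the reduction of telescoper-computation to Problem~\ref{pb:lin_rel_pseudo_lin_map}, the explicit realisation of Lemma~\ref{lem:herm_red_XMm1Y}, and Theorem~\ref{thm:bound_realisation}. First I would recall from Section~\ref{sec:hermite_red_pb1} that a minimal telescoper $L$ for $f$ corresponds exactly to a solution $(\eta_0,\dots,\eta_r)$ of Problem~\ref{pb:lin_rel_pseudo_lin_map} on input $(T,p)$, where $T \in k(x)^{d_y\times d_y}$ is the Hermite-reduction operator and $p = \herm(f)$. Since $q$ is square-free with $\deg_y p < d_y$, no reduction occurs and $p$ is just the numerator of $f$, whose coefficients in the basis $(1,y,\dots,y^{d_y-1})$ lie in $k[x]$ of degree at most~$d_x$. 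Thus $d_a = \deg p \le d_x$, and the order~$r$ of the minimal telescoper equals the value $\rho$ appearing in Problem~\ref{pb:lin_rel_pseudo_lin_map}, in particular $r \le d_y$.

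Next, I would invoke the realisation $T = XM^{-1}Y$ from Lemma~\ref{lem:herm_red_XMm1Y} and set $\Delta = \det M$. The genericity hypothesis on the coefficient of degree~$d_x$ of~$q$ is exactly the condition identified in Lemmas~\ref{lem:expresion_Delta_Hermite_red} and~\ref{lem:maximal_degree_resultant}: it ensures $\Delta$ reaches its maximal degree $\delta = 2d_xd_y$, and consequently, by the analysis of Section~\ref{sec:hermite_strict_properness}, the matrix $T$ is strictly proper. At this stage all hypotheses of Theorem~\ref{thm:bound_realisation} are met.

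Applying Theorem~\ref{thm:bound_realisation} with these parameters yields a solution $\eta = (\eta_0,\dots,\eta_r) \neq 0$ of Problem~\ref{pb:lin_rel_pseudo_lin_map} with
\[
\deg \eta_i \le r\,d_a + r\,\delta - \bigl(r(r+1)/2 - i\bigr) \le r d_x + 2 r d_x d_y - r(r+1)/2 + i.
\]
The degree of the associated telescoper is $\max_i \deg \eta_i$, which is attained at $i=r$ and gives $r d_x + 2 r d_x d_y - r(r-1)/2$, matching the claimed bound.

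No step presents a real obstacle since the work has already been done in the preceding lemmas; the only point requiring care is the bookkeeping between the order $r$ of the telescoper and the parameter $\rho$ of Problem~\ref{pb:lin_rel_pseudo_lin_map}, and the identification of $i=r$ as the index maximising the right-hand side in Theorem~\ref{thm:bound_realisation}. The genericity hypothesis is used solely to force strict properness of $T$; dropping it would require the strengthened version of Theorem~\ref{thm:bound_realisation} alluded to in the introduction.
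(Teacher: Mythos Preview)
Your proposal is correct and follows essentially the same route as the paper's proof, which is simply: the genericity hypothesis makes $T$ strictly proper (via Lemmas~\ref{lem:expresion_Delta_Hermite_red} and~\ref{lem:maximal_degree_resultant}), Lemma~\ref{lem:herm_red_XMm1Y} gives a realisation with $\deg\det M \le 2d_xd_y$, and Theorem~\ref{thm:bound_realisation} yields the bound. Your write-up is in fact more explicit than the paper's terse three-line proof, correctly identifying $d_a\le d_x$, $\rho=r\le d_y$, and the maximising index $i=r$; the only superfluous remark is that genericity forces $\delta=2d_xd_y$ exactly, whereas the inequality $\delta\le 2d_xd_y$ from Lemma~\ref{lem:herm_red_XMm1Y} already suffices for the degree estimate.
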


\begin{proof}
  Under this assumption,
  the matrix $T$ is strictly
  proper and by
  Lemma~\ref{lem:herm_red_XMm1Y}, there is a realisation
  $T = XM^{-1}Y$ with $\deg \det M \le 2 d_x d_y$.
  Thus by Theorem~\ref{thm:bound_realisation},
  the claimed degree bound holds. 
\end{proof}

In Theorem~\ref{thm:herm_red_generic_bound}, the condition
on the coefficient of degree $d_x$ in $x$ of $q$ holds
when the coefficients of $q$ do not belong to a certain nontrivial
hypersurface. In other words, this is a genericity assumption.
Thus, in the generic situation, we retrieve the best known bound
on the degree of a minimal telescoper for a bivariate rational
function~\cite[Thm.~25]{bostan2010complexity}, namely
a degree governed by $2d_y^2d_x$.
This bound was derived from the analysis of the Almkvist-Zeilberger
algorithm for rational functions~\cite{almkvist1990method}.
This is the first time a degree
bound in $ O(d_y^2 d_x)$ is derived directly from the Hermite reduction-based
algorithm.

\subsection{Differential resolvent}
\label{sec:algeq_to_diffeq}

Let $P(x,y) \in k[x,y]$ be square-free with respect to $y$ and let
$d_x$ and $d_y$ be the degrees in $x$ and $y$ of $P$. We write
$P_x = \DP_x(P)$ and $P_y = \DP_y(P)$.
In this section, $\mathcal A$ again denotes the vector space $k(x)[y]_{< d_y}$.

\subsubsection{Reduction to Problem~\ref{pb:lin_rel_pseudo_lin_map}}
\label{sec:algeqtodiffeq_red_pb1}

Following the lines
of~\cite[Sec.~2.1]{bostan2007algeqtodiffeq}, we first show that a
differential resolvent of $P$ can be computed as a solution of
Problem~\ref{pb:lin_rel_pseudo_lin_map}.  Let
$\alpha_1, \dots, \alpha_{d_y}$ be the roots of $P$ in the algebraic
closure of $k(x)$.  For any root $\alpha$ of $P$, by differentiating
$P(x,\alpha) =0$ and since $P$ is square-free, we have
$\DP_x (\alpha) = - P_x(x,\alpha) / P_y(x,\alpha)$.
Hence the derivation
$\DP_x$ uniquely extends to $k(x)(\alpha_1,\dots,\alpha_{d_y})$.  By
induction, for all $i\ge1$, there exists a polynomial
$C_i \in k[x,y]$ such that 
$\alpha^{(i)} = C_i(x,\alpha) / P_y(x,\alpha)^{2i-1}$,
for any root $\alpha$ of $P$.  Since $P$ and
$P_y$ are coprime, one can define $D_i(x,y)$  to be the polynomial
in $\mathcal A$
such that $D_i(x,\alpha) = \alpha^{(i)}$
for all roots $\alpha$ of $P$. Note that $D_0 = y$.
And thus it suffices to find the minimal
linear dependence between the $D_i$'s to compute the differential
resolvent of $P$.

Moreover, by differentiating $\alpha^{(i)} = D_i(x,\alpha)$, we obtain
\begin{align*}
  \alpha^{(i+1)} &= \DP_x (D_i(x,\alpha)) + \DP_x( \alpha) \cdot \DP_y (D_i(x,\alpha)) \\
                 &= \DP_x (D_i(x,\alpha)) - \frac{P_x(x,\alpha)}{P_y(x,\alpha)} \DP_y (D_i(x,\alpha)).
\end{align*}
Therefore $D_{i+1} = (\DP_x + T)\cdot D_i$, where $T$ is the $k(x)$-linear
map $T \colon a \mapsto - \DP_y(a) P_x /P_y ~\mathrm{mod}~P$ on
$\mathcal A$. Finally, the
differential resolvent of $P$ is the solution of
Problem~\ref{pb:lin_rel_pseudo_lin_map} for $(T,y)$. 

\subsubsection{Realisation of $T$}
\label{sec:algeqtodiffeq_realisation}
Fix the monomial basis $(1,y,\dots,y^{d_y-1})$ of
$\mathcal A$ and view $T$ as a matrix in $k(x)^{d_y \times d_y}$.

\begin{lemma}
  \label{lem:algeqtodiffeq_XMm1Y}
  There exist polynomial matrices $X, M$ and $Y$, with $M$ invertible
  and $\deg \det M \le (2d_y -1)d_x$ such that $T = XM^{-1}Y$.
\end{lemma}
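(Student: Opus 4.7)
The plan is to mirror the construction used for Hermite reduction (Lemma~\ref{lem:herm_red_XMm1Y}), now interpreting the map $T \colon a \mapsto -\DP_y(a) P_x / P_y \bmod P$ on $\mathcal A$ as carrying out a reduction modulo $P$ with an intermediate Bezout-type system. The first step $a \mapsto b := -P_x \DP_y(a)$ is polynomial: since $\deg_y \DP_y(a) < d_y - 1$ and $\deg_y P_x \le d_y$, the polynomial $b$ lies in $\mathcal B := k(x)[y]_{< 2d_y - 1}$ and has degree at most $d_x$ in $x$ coming from $P_x$. The second step computes $Ta = r$ where $r \in \mathcal A$ is the unique element such that $P_y r \equiv b \pmod P$, equivalently, such that there exists $s \in k(x)[y]_{< d_y - 1}$ with
\begin{equation*}
  P_y r + P s = b.
\end{equation*}

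I would then encode this system in matrix form. In the monomial bases of $\mathcal A$ and $k(x)[y]_{< d_y - 1}$ on the one side and $\mathcal B$ on the other, the linear map $(r,s) \mapsto P_y r + P s$ is represented by a $(2d_y - 1) \times (2d_y - 1)$ polynomial matrix $M$ that is exactly the Sylvester matrix $\mathrm{Syl}_y(P_y, P)$ (possibly up to column reordering). Define $Y \colon a \mapsto -P_x \DP_y(a)$, which is polynomial of degree at most $d_x$ in $x$, and let $X$ be the projection onto the first $d_y$ coordinates (the $r$-block). Then by construction $T = X M^{-1} Y$.

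For the degree bound, I observe that $M$ has entries of degree at most $d_x$, and it is a square matrix of size $2d_y - 1$, so $\deg \det M \le (2d_y - 1) d_x$ by direct expansion. The invertibility of $M$ follows from the fact that $\det M = \pm \res_y(P_y, P)$, which is nonzero precisely because $P$ is square-free with respect to $y$ by hypothesis; this is the standard Sylvester-resultant criterion and is the only non-routine input to the proof. The square-freeness assumption on $P$ is exactly what guarantees that $P_y$ is a unit in $\mathcal A = k(x)[y]/(P)$, so the decomposition is well-defined, and thus the realisation $T = XM^{-1}Y$ satisfies all required properties.
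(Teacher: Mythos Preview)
Your proposal is correct and follows essentially the same approach as the paper: both identify $M$ with the Sylvester matrix of $P$ and $P_y$, take $Y$ to be the polynomial map $a \mapsto -P_x\,\DP_y(a)$, and $X$ the projection onto the $d_y$ coordinates corresponding to the remainder; the degree bound $\deg\det M \le (2d_y-1)d_x$ and the invertibility via $\det M = \pm\res_y(P,P_y)\neq 0$ are obtained in the same way. The only cosmetic differences are the ordering of the two blocks (you place $r$ first, the paper places $V$ last) and your explicit check that $s\in k(x)[y]_{<d_y-1}$, which the paper leaves implicit in the phrase ``Sylvester map''.
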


\begin{proof}
  Recall that $T a$ is the solution $V$ of the following
  B{\'e}zout equation
  \begin{align*}
    -\DP_y(a)P_x = U \cdot P + V \cdot P_y.
  \end{align*}
  So if
  $Y \colon a \in \mathcal A \mapsto -\DP_y(a)P_x \in k(x)[y]_{<
    2d_y -1}$ and $X$ is the projection $(U,V) \mapsto V$ onto the
  last $d_y$ coordinates, we have $T = X M^{-1}Y$ where $M$ is the
  Sylvester map associated to $P$ and $P_y$ and $X,M,Y$ are three
  polynomial matrices. So $\det M$ is the resultant of $P$ and $P_y$
  (with respect to $y$) and it has degree at most $(2d_y-1)d_x$.
\end{proof}

\subsubsection{Strict properness}
\label{sec:algeqtodiffeq_strict_properness}

An analysis similar to that
of Section~\ref{sec:hermite_red_creative_telesc} shows that it
suffices that $M^{-1} = O(1/x^{d_x})$ for $T$ to be strictly
proper. This is in particular the case when $\det M$ has maximal
degree $(2d_y-1)d_x$, \ie, when $\res_y(P,P_y)$ has maximal degree.
By Lemma~\ref{lem:maximal_degree_resultant}, this property
holds when the coefficient of degree $d_x$ of~$P$ in $x$ has degree
$d_y$ and is square-free.

\subsubsection{Degree bound}
\label{sec:algeqtodiffeq_degree_bound}

\begin{theorem}
  \label{thm:algeqtodiffeq_generic_bound}
  Let $P \in k[x,y]$ be square-free with respect to $y$,
  $(d_x,d_y) = (\deg_x P, \deg_y P)$. If the coefficient
  of degree $d_x$ of $P$ in $x$ has degree
  $d_y$ and is square-free, then there exists a differential resolvent of
  $P$ of order $r \le d_y$ and degree at most
  $r(2d_y-1)d_x - 1/2 \cdot r(r-1)$.
\end{theorem}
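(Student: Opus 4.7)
The plan is to apply Theorem~\ref{thm:bound_realisation} directly, feeding in the realisation of $T$ built in Lemma~\ref{lem:algeqtodiffeq_XMm1Y}. By the reduction recalled in Section~\ref{sec:algeqtodiffeq_red_pb1}, a minimal-order differential resolvent of $P$ is exactly a solution $\eta$ of Problem~\ref{pb:lin_rel_pseudo_lin_map} applied to the pair $(T,a)$ with $a = y = D_0$ and $\theta = \DP_x + T$. From Lemma~\ref{lem:algeqtodiffeq_XMm1Y} the realisation $T = XM^{-1}Y$ satisfies $\delta := \deg \det M \le (2d_y-1) d_x$, and since the coordinates of $a = y$ in the monomial basis $(1,y,\dots,y^{d_y-1})$ are constants, we have $d_a = 0$. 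The rank parameter $\rho$ of Problem~\ref{pb:lin_rel_pseudo_lin_map} is bounded by $\dim_{k(x)} \mathcal{A} = d_y$, giving the desired order bound $r := \rho \le d_y$.

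Next I would verify the strict properness hypothesis, which is the single non-automatic condition of Theorem~\ref{thm:bound_realisation}. Under the genericity assumption on the coefficient of $P$ of degree $d_x$ in $x$, Lemma~\ref{lem:maximal_degree_resultant} guarantees that $\res_y(P,P_y)$ attains its maximal degree $(2d_y-1)d_x$. The argument sketched in Section~\ref{sec:algeqtodiffeq_strict_properness} then gives $M^{-1} = O(1/x^{d_x})$ at infinity, and combined with the fact that $Y$ is multiplication by $-P_x$ composed with $\DP_y$ (hence of degree at most $d_x$) and $X$ is a constant projection, this forces $T = XM^{-1}Y$ to be strictly proper.

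With both hypotheses checked, Theorem~\ref{thm:bound_realisation} yields
\begin{equation*}
\deg \eta_i \;\le\; \rho d_a + \rho \delta - \left(\tfrac{\rho(\rho+1)}{2} - i\right) \;=\; r(2d_y-1)d_x - \tfrac{r(r+1)}{2} + i.
\end{equation*}
Taking the maximum over $i \in \{0,\dots,r\}$, attained at $i = r$, gives the advertised uniform bound $r(2d_y-1)d_x - r(r-1)/2$. I do not expect a genuine obstacle here: the realisation and the strict properness analysis have been done upstream, so the argument is essentially a substitution. The only subtle point to guard against is confusing $d_a = \deg a$ (which is $0$, since $a$ is a vector of polynomials in $x$ with constant entries) with the $y$-degree of $D_0 = y$.
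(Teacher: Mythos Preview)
Your proof is correct and follows exactly the paper's route: invoke the realisation of Lemma~\ref{lem:algeqtodiffeq_XMm1Y}, verify strict properness via Section~\ref{sec:algeqtodiffeq_strict_properness} and Lemma~\ref{lem:maximal_degree_resultant}, then plug $d_a=0$ and $\delta\le(2d_y-1)d_x$ into Theorem~\ref{thm:bound_realisation}. One minor slip to tighten: you state $\deg_x Y \le d_x$, but since $Y$ involves multiplication by $P_x=\DP_x P$, which has $x$-degree at most $d_x-1$, the correct bound is $\deg_x Y \le d_x-1$; this is precisely what is needed so that $M^{-1}=O(1/x^{d_x})$ together with $X$ constant gives $T=XM^{-1}Y=O(1/x)$, i.e.\ genuine strict properness rather than mere properness.
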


\begin{proof}
  Under this condition, the matrix $T$ is strictly proper. Moreover, it has
  realisation $T = XM^{-1}Y$ with $\deg \det M \le (2d_y-1)d_x$ by
  Lemma~\ref{lem:algeqtodiffeq_XMm1Y}.  The claimed
  bound follows from Theorem~\ref{thm:bound_realisation}.
\end{proof}

Again, the condition above is a genericity assumption on $P$.
Our bound is in $O(d_y^2 d_x)$ and governed by $2d_y^2 d_x$.
The previous best known bound behaves like 
$4d_y^2d_x$~\cite[Thm.~1]{bostan2007algeqtodiffeq} and its
proof is independent from  Cockle's algorithm.
So this is an improvement of the constant from $4$ to $2$ in
the generic situation.

Experimentally if we assume that $d_x = d_y = d$, we observe that the
output degree is bounded by $d(2d^2 - 3d +3)$. Our bound gives
$d(2d^2 - 3d/2 + 1/2)$ and the previous bound
is $d(4d^2 -11d/2 + 7/2)$~\cite{bostan2007algeqtodiffeq}.  This is
the first time that a bound with the first term $2d^3$ is obtained.

\subsection{Least common left multiple}
\label{sec:lclm}

Let $L_1, \dots, L_s$ be linear
differential operators in $k[x]\left< \DP_x \right>$ 
and let $L = \lclm(L_1,\dots,L_s)$ be a least common left multiple.
%
%
Again, we apply Theorem~\ref{thm:bound_realisation} to
bound the degree of $L$ and compare with the existing
bound~\cite{bostan2012LCLM}.
For simplicity, we first assume $s=2$ and $L = \lclm(L_1,L_2)$ as the
approach will be easily generalised for arbitrary $s$.

\subsubsection{Reduction to Problem~\ref{pb:lin_rel_pseudo_lin_map}
and realisation}
\label{sec:lclm_red_pb1_realisation}

We follow the standard approach recalled in~\cite[Sec.~4.2.2]{bostan2012LCLM}
or~\cite[Algo.~4.27]{Kauers23}.
Let $r_1$,
$r_2$ be the respective orders of $L_1$ and $L_2$, and
$\alpha= \alpha_1 + \alpha_2$ be the sum of generic solutions of $L_1$ and
$L_2$ respectively.  The coefficients of $L$ are read from a linear relation
between the successive derivatives of $\alpha$.  These derivatives all
lie in the finite dimensional vector space spanned by
$A = (\alpha_1, \dots \alpha_1^{(r_1-1)},\alpha_2, \dots,
\alpha_2^{(r_2-1)})$.  Let
$L_1 = p_{1,r_1}\DP_x^{r_1} + \cdots + p_{1,0}$ and
$L_2 = p_{2,r_2}\DP_x^{r_2} + \cdots + p_{2,0}$ with $p_{1,r_1}$ and
$p_{2,r_2}$ nonzero.  By induction suppose that for $\ell \ge 0$, we have
written
$\alpha^{(\ell)} = \sum_{j=0}^{r_1 - 1 } a_j \alpha_1^{(j)} +
\sum_{j=0}^{r_2 -1 } b_j \alpha_2^{(j)}$ with $a_j,b_j \in k(x)$.
Then differentiation yields
\begin{align*}
  \alpha^{(\ell+1)} = \sum_{j=0}^{r_1 -1} \left( a_j' \alpha_1^{(j)} + a_j \alpha_1^{(j+1)} \right) +
  \sum_{j=0}^{r_2 -1} \left( b_j' \alpha_2^{(j)} + b_j \alpha_2^{(j+1)} \right).
\end{align*}
Finally, we use $L_1(\alpha_1) = L_2(\alpha_2) = 0$ to rewrite
$\alpha_1^{(r_1)}$ and $\alpha_2^{(r_2)}$ on the generating set~$A$. 
So if $V_\ell$ denotes
the vector of coefficients of $\alpha^{(\ell)}$ in $A$ for all $\ell$, we
obtain $V_{\ell+1} = (\DP_x+ T )\cdot V_\ell$, where
$T \in k(x)^{(r_1 + r_2) \times (r_1 + r_2)}$ is the block diagonal
matrix $\diag(C_1,C_2)$ with $C_i$ the companion matrix associated to
$L_i$, \ie $C_i \in k(x)^{r_i \times r_i}$ is companion with last
column $ -1/p_{i,r_i} \cdot
\begin{bmatrix}
  p_{i,0} &  \cdots & p_{i,r_i-1}
\end{bmatrix}^t$.
But such a block companion matrix $T$ is clearly not strictly proper
since some of its entries are $1$.

In order to be able to apply Theorem~\ref{thm:bound_realisation}
we propose
to express the problem in a different generating set.  Let $\mathfrak{d} = x \DP_x$,
also known as the Euler operator.  We choose to write the successive
derivatives of $\alpha$ as linear combinations of elements of
the new generating set 
$B = (\alpha_1, \dots, \mathfrak{d}^{r_1-1}\alpha_1, \alpha_2, \dots
,\mathfrak{d}^{r_2-1}\alpha_2)$.  Now if $W_\ell$ denotes the vector of
coefficients of $\alpha^{(\ell)}$ in $B$ for all~$\ell$, by differentiating
we get that $W_{\ell+1} = (\DP_x + T_\mathfrak{d})\cdot W_\ell$ with
$T_\mathfrak{d} \in k(x)^{(r_1 + r_2) \times (r_1 + r_2)}$ the matrix whose
columns are the derivatives of the elements of $B$ expressed on~$B$. 
Notice that for $i=1,2$ and $\ell \ge 0$, we have
$(\mathfrak{d}^\ell \alpha_i)' = 1/x \cdot \mathfrak{d}^{\ell+1} \alpha_i$.  Also, any
linear differential operator in $x$ and $\DP_x$ can be rewritten as a
linear differential operator of the same order in $x$ and $\mathfrak{d}$ with
the following degree bounds on the coefficients.
\begin{lemma}
  \label{lem:degree_conversion_diffeq_euler_diffeq}
  Let $L = \sum_{j=0}^r p_j(x) \DP_x^j \in k[x]\left<\DP_x\right>$ of
  order $r$ with $d_j = \deg p_j$. There exists an operator
  $\tilde L = \sum_{j=0}^r q_j(x) \mathfrak{d}^j \in
  k[x]\left<\mathfrak{d}\right>$ of order $r$ whose solution set is the same
  as $L$. Moreover, we have $\deg q_r = d_r$ and
  $\deg q_j \le r + \max_{\ell \ge j}( d_\ell -\ell)$ for all $j$.
\end{lemma}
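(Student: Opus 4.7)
The plan is to rewrite $L$ in terms of $\mathfrak{d}$ using the classical identity
$x^j \partial_x^j = \mathfrak{d}(\mathfrak{d}-1)\cdots(\mathfrak{d}-j+1) =: \phi_j(\mathfrak{d})$,
which is proved by a straightforward induction on $j$ starting from $x\partial_x = \mathfrak{d}$. With this in hand, $\partial_x^j = x^{-j}\phi_j(\mathfrak{d})$, and left-multiplying $L$ by $x^r$ (which does not alter the solution set, since $x^r$ is a nonzero element of $k(x)$) clears all denominators in $x$. This produces the operator
\[
\tilde L \;=\; x^r L \;=\; \sum_{j=0}^r p_j(x)\,x^{r-j}\,\phi_j(\mathfrak{d}).
\]

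Next I would expand $\phi_j(T) = \sum_{\ell=0}^{j} s_{j,\ell}\,T^\ell$ with $s_{j,\ell}\in k$ (signed Stirling numbers of the first kind). Because the scalars $s_{j,\ell}$ commute with everything, we may collect in powers of $\mathfrak{d}$ on the right:
\[
\tilde L \;=\; \sum_{\ell=0}^{r} q_\ell(x)\,\mathfrak{d}^\ell,
\qquad
q_\ell(x) \;=\; \sum_{j=\ell}^{r} s_{j,\ell}\,p_j(x)\,x^{r-j}.
\]
Since the leading term of $\phi_j(T)$ is $T^j$, only $j=r$ contributes to $q_r$ with $s_{r,r}=1$, giving $q_r = p_r$ and thus $\deg q_r = d_r$ and $\tilde L$ of order exactly $r$. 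For the remaining coefficients, each summand $p_j(x)\,x^{r-j}$ has degree $d_j + r - j$, so
\[
\deg q_\ell \;\le\; \max_{j \ge \ell}(d_j + r - j) \;=\; r + \max_{\ell' \ge \ell}(d_{\ell'} - \ell'),
\]
which is exactly the announced bound.

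The only potentially delicate point is the non-commutativity of $x$ and $\mathfrak{d}$: one must keep the polynomials in $x$ strictly to the \emph{left} of the polynomials in $\mathfrak{d}$ throughout the manipulation, which is why the key identity is presented in the form $x^j \partial_x^j = \phi_j(\mathfrak{d})$ (not $\partial_x^j x^j$) and why left-multiplication by $x^r$ is the natural normalisation. Once this ordering convention is respected, the argument reduces to the elementary expansion above, so I do not expect any genuine obstacle. A short verification that $x^r L\cdot y = 0 \iff L\cdot y = 0$ for any $y$ in the relevant differential extension closes the proof.
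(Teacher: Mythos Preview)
Your proof is correct and follows exactly the approach indicated in the paper, which simply cites the identity $x^j\partial_x^j = \mathfrak{d}(\mathfrak{d}-1)\cdots(\mathfrak{d}-j+1)$ and declares the result a direct consequence. You have merely spelled out the details (left-multiplication by $x^r$, expansion via Stirling numbers, and the resulting degree count), all of which are sound.
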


\begin{proof}
  The result is a direct consequence of the well-known identity
  $x^{j}\DP_x^j =  \mathfrak{d}\cdots (\mathfrak{d} - j+1)$ for all
  $j \ge 0$~\cite[p.~26]{poole1936introduction}.
\end{proof}

\noindent By Lemma~\ref{lem:degree_conversion_diffeq_euler_diffeq},
$L_i = q_{i,r_i}(x) \mathfrak{d}^{r_i} + \cdots + q_{i,0}(x)$ with
$q_{i,r_i}$ nonzero for $i=1,2$.  Let $D_i$ be the companion matrix of
size $r_i$ whose last column is $-1/q_{i,r_i} \cdot
\begin{bmatrix}
  q_{i,0} & \cdots  & q_{i,r_i-1}
\end{bmatrix}^t $.
One can write
$ T_\mathfrak{d} = 1/x \cdot \diag(D_1,D_2)$.
Therefore $T_\mathfrak{d} = XM^{-1}$ with
\begin{align*}
  M = x \cdot \diag(1,\dots,1,-q_{1,r_1},1,\dots,1,-q_{2,r_2}),
\end{align*}
and $X$ the block diagonal matrix whose blocks are companion matrices
associated to $(q_{1,0},\dots,q_{1,r_1-1})$ and
$(q_{2,0},\dots, q_{2,r_2-1})$.

\subsubsection{Strict properness}
\label{sec:lclm_strict_properness}

The matrix $~T_{\mathfrak{d}}~$ is strictly proper
when $\diag(D_1,D_2) = O(1)$ and thus when
$\deg(q_{i,j}) \le \deg(q_{i,r_i})$ for all $j$ and $i=1,2$.  
%
%
%
By Lemma~\ref{lem:degree_conversion_diffeq_euler_diffeq}, we deduce
that $\deg(q_{i,j}) \le \deg(q_{i,r_i})$ when
$\deg(p_{i,j}) + r_i-j \le \deg(p_{i,r_i})$ for all $j$. 
This is equivalent to $x = \infty$ not being an irregular
singularity of
the equation defined by $L_i$~\cite[\S20]{poole1936introduction}.
Hence, we conclude that if $x=\infty$ is not an irregular
singularity of $L_1$ nor of $L_2$, then $T_\mathfrak{d}$ is strictly proper.


\subsubsection{Degree bound}
\label{sec:lclm_degree_bound}

\begin{theorem}
  \label{thm:LCLM_fuchsian_degree_bound}
  Let $L_1, L_2 \in k[x]\left<\DP_x\right>$  be of
  respective orders $r_1$ and~$r_2$, and degrees at most $d$ in $x$.
  Suppose $x = \infty$ is not an irregular singularity of $L_1$ nor of $L_2$.
  Then there exists an LCLM $L$ of $L_1$ and $L_2$ of order $r \le r_1 + r_2$ and
  degree at most $r(r_1 + r_2 + 2d) -1/2 \cdot r (r-1)$.
\end{theorem}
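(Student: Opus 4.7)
The plan is to assemble the ingredients prepared in Sections~\ref{sec:lclm_red_pb1_realisation} and~\ref{sec:lclm_strict_properness} and plug them into Theorem~\ref{thm:bound_realisation}. From Section~\ref{sec:lclm_red_pb1_realisation}, the coefficients of an LCLM of $L_1$ and $L_2$ can be read off from a minimal linear relation between the iterates $\theta^\ell W_0$ with $\theta = \partial_x + T_\mathfrak{d}$, where $W_0$ encodes $\alpha = \alpha_1 + \alpha_2$ on the basis $B = (\alpha_1,\mathfrak{d}\alpha_1,\dots,\mathfrak{d}^{r_1-1}\alpha_1,\alpha_2,\dots,\mathfrak{d}^{r_2-1}\alpha_2)$. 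This is precisely an instance of Problem~\ref{pb:lin_rel_pseudo_lin_map} for the pair $(T_\mathfrak{d}, W_0)$, and the order of the resulting LCLM equals $\rho = \dim \spn_{k(x)}(\theta^\ell W_0, \ell\ge 0) \le r_1+r_2$.

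Next I would verify the input parameters required by Theorem~\ref{thm:bound_realisation}. First, the strict properness of $T_\mathfrak{d}$ is granted by the assumption that $x=\infty$ is not an irregular singularity of $L_1$ or $L_2$, as shown in Section~\ref{sec:lclm_strict_properness}. Second, the vector $W_0$ has only $0/1$ entries (it picks out $\alpha_1$ and $\alpha_2$ among the elements of $B$), so $d_a = \deg W_0 = 0$. Third, the realisation $T_\mathfrak{d} = XM^{-1}$ with $M = x\cdot \diag(1,\dots,1,-q_{1,r_1},1,\dots,1,-q_{2,r_2})$ yields
\begin{equation*}
  \Delta = \det M = \pm\, x^{r_1+r_2}\, q_{1,r_1}\, q_{2,r_2},
\end{equation*}
and by Lemma~\ref{lem:degree_conversion_diffeq_euler_diffeq} we have $\deg q_{i,r_i} = \deg p_{i,r_i} \le d$, so
\begin{equation*}
  \delta = \deg \Delta \le r_1 + r_2 + 2d.
\end{equation*}

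It then suffices to apply Theorem~\ref{thm:bound_realisation} with $d_a=0$ and this value of $\delta$: it produces a solution $\eta = (\eta_0,\dots,\eta_\rho)$ with
\begin{equation*}
  \deg \eta_i \le \rho\,\delta - \bigl(\rho(\rho+1)/2 - i\bigr),
\end{equation*}
whose maximum (reached at $i = \rho$) is $\rho\,\delta - \rho(\rho-1)/2$. Writing $r = \rho$ for the LCLM order, this gives the announced bound $r(r_1+r_2+2d) - r(r-1)/2$. The argument is essentially a direct specialisation; the only point that needs care is identifying $W_0$ as a constant vector so that $d_a=0$, and correctly reading off $\delta$ from the explicit diagonal form of $M$. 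There is no real obstacle beyond this bookkeeping, since Lemma~\ref{lem:degree_conversion_diffeq_euler_diffeq} and the realisation of $T_\mathfrak{d}$ have already been set up in Section~\ref{sec:lclm}.
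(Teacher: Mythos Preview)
Your proposal is correct and follows essentially the same route as the paper: identify the LCLM computation as Problem~\ref{pb:lin_rel_pseudo_lin_map} for $(T_\mathfrak{d},a)$ with $a=(1,0,\dots,0,1,0,\dots,0)^t$ so that $d_a=0$, use the hypothesis at infinity to ensure strict properness, read off $\delta\le r_1+r_2+2d$ from the diagonal $M$, and invoke Theorem~\ref{thm:bound_realisation}. The paper's proof is terser but identical in substance.
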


\begin{proof}
  The coefficients of $L$ can be directly derived from the solution to
  Problem~\ref{pb:lin_rel_pseudo_lin_map} with input $(T_\mathfrak{d},a)$
  where
  \begin{align*}
    a= (
    \overbrace{1,0,\dots,0}^{r_1},\overbrace{1,0,\dots,0}^{r_2}
    )^t.
  \end{align*}
  Besides, $T_\mathfrak{d}$ is strictly
  proper by assumption.  Also, $T_\mathfrak{d} = XM^{-1}$ with
  $\deg \det M \le 2d + r_1 + r_2$.  Thus the result follows from
  Theorem~\ref{thm:bound_realisation}.
\end{proof}

This result generalizes with $s \ge 2$ differential equations
with no irregular singularity at infinity
and a matrix $T_\mathfrak{d}$ with $s$ different blocks.  Let $r_i$ denote
the order of $L_i$ for $1 \le i \le s$, $R = \sum_{i=1}^s r_i$ and $d$
be a bound on the degrees in $x$ of all the $L_i$'s.  A least common
left multiple $L = \lclm(L_1,\dots, L_s)$ can be computed as the
solution of Problem~\ref{pb:lin_rel_pseudo_lin_map} for
$T_\mathfrak{d} = XM^{-1} \in k(x)^{R \times R}$ and
\begin{align*}
  a=
  (
  \overbrace{1,0,\dots,0}^{r_1},\dots,\overbrace{1,0,\dots,0}^{r_s}
  )^t,
\end{align*}
where $X,M$ are polynomial matrices. This time $\Delta = \det M$ is
the product of $x^R$ and the leading coefficients of the $L_i$'s (up
to a sign), so $\deg \Delta \le sd + R$.  Hence by
Theorem~\ref{thm:bound_realisation}, the order of $L$ is less than~$R$
and its degree in $x$ is bounded by $R(sd + R)$.  This is slightly
larger than the bound $d(s(R+1)-R)$~\cite[Thm.~6]{bostan2012LCLM}, but
as $R \le sd$ since $x=\infty$ is not an irregular point,
the leading term of both bounds
behaves like $dsR \le ds^2 \max_i(r_i)$.


\subsection{Symmetric product}
\label{sec:symmetric_product}

Let $L_1, L_2 \in k[x]\left<\DP_x\right>$ and $L = L_1 \otimes L_2$
be a symmetric product.  Let $r_i$ be the order of $L_i$
and $d_i$ be its degree for $i=1,2$. For $i=1,2$, we write
$L_i = \sum_{j=0}^{r_i} p_{i,j}\DP_x^j$ with $p_{i,j} \in k[x]$.  It
is known that the order of $L$ is at most $r_1 r_2$ and
the degree of $L$
in $x$ can be bounded by $r_1^2 r_2^2(d_1 + d_2)$~\cite[Thm.~8]{kauers2014bounds_Dfinite_closure}.
However, experiments
suggest that this bound is not tight since one expects the degree of
$L$ to be bounded by $(r_1r_2 - r_1 - r_2 + 2)(d_1r_2 + d_2r_1)$.


\subsubsection{Reduction to Problem~\ref{pb:lin_rel_pseudo_lin_map}}
\label{sec:sym_prod_red_pb1}

By Leibniz's rule, we express the successive derivatives of
$\alpha = \alpha_1 \alpha_2$ according to the elements
$(\mathfrak{d}^h \alpha_1 \mathfrak{d}^p \alpha_2)$ for $0 \le h < r_1$ and
$0 \le p < r_2$. We again use the Euler operator to satisfy the
strict properness assumption.  Let
$b_{h,p} = \mathfrak{d}^h \alpha_1 \mathfrak{d}^p \alpha_2$ for all $h,p$ and
suppose that we have
\begin{align*}
  \alpha^{(\ell)} = \sum_{h=0}^{r_1-1}  \sum_{p=0}^{r_2-1} e_{h,p}(x) b_{h,p},
\end{align*}
with $e_{h,p} \in k(x)$. We denote by $V_\ell \in k(x)^{r_1 r_2}$ the
vector of coefficients of $\alpha^{(\ell)}$ in this decomposition.
Differentiating gives
\begin{align*}
  \alpha^{(\ell+1)} = \sum_{h,p}\left( e_{h,p}' b_{h,p} + 1/x \cdot e_{h,p}(b_{h+1,p} + b_{h,p+1})\right).
\end{align*}
Next the relations $L_1(\alpha_1) = L_2(\alpha_2) = 0$ are used to
rewrite $b_{r_1,p}$ and $b_{h,r_2}$ in the expression above.  Thus,
one can write $V_{\ell+1} = (\DP_x + T) \cdot V_\ell$ where $T$ is
$k(x)$-linear and maps any $b_{h,p}$ for $0 \le h < r_1$ and
$0 \le p < r_2$ to $1/x \cdot (b_{h+1,p} + b_{h,p+1})$ rewritten
according to $(b_{h,p})$ for $0 \le h < r_1$ and $0 \le p < r_2$ using
$L_1$ and $L_2$.
For $i= 1,2$, let $L_i = \sum_{j=0}^{r_i}q_{i,j} \mathfrak{d}^i$ with
$q_{i,j} \in k[x]$.
For $t \ge 0$, $I_t$ denotes the identity matrix of size $t$.
In the basis
$(b_{0,0},\dots,b_{0,r_2-1},b_{1,0},\dots,b_{r_1-1,r_2-1})$, $T$ can
be seen as the matrix
\begin{align}
  \label{eq:symprod_T}
  T = 1/x \cdot(D_1 \otimes I_{r_2} + I_{r_1} \otimes D_2) \in k(x)^{r_1r_2
  \times r_1 r_2},
\end{align}
where $D_i$ denotes the same companion matrix as in
Section~\ref{sec:lclm} and $\otimes$ the Kronecker product.
Also, the coefficients of $L$ can be directly read from the solution to
Problem~\ref{pb:lin_rel_pseudo_lin_map} with input $(T,e_1)$ where
$e_1 = (1,0,\dots) \in k(x)^{r_1r_2}$. 
Besides, if the order of $L$ is $r_1r_2$, we retrieve Amitsur's notion
of resultant of
the differential polynomials 
$L_1$ and $L_2$~\cite[\S~4]{amitsur1954diffpoly_divalgebra}.

\subsubsection{Realisation of $T$}
\label{sec:symprod_realisation}
We have
$T = (1/x \cdot D_1) \otimes I_{r_2} + I_{r_1} \otimes (1/x \cdot
D_2)$.
By Section~\ref{sec:lclm_red_pb1_realisation},
$1/x \cdot D_i$ can be written as $X_i M_i^{-1}$ with
$X_i \in k[x]^{r_i \times r_i}$ the companion matrix whose last column
is $
\begin{bmatrix}
  q_{i,0} & \cdots  & q_{i,r_i-1}
\end{bmatrix}^t$ and $M_i = \diag(x,\dots,x,-q_{i,r_i}x)
\in k[x]^{r_i \times r_i}$.
Moreover, we have $\det M_i = - x^{r_i}q_{i,r_i}$.
Next,
$(1/x \cdot D_1) \otimes I_{r_2} = (X_1M_1^{-1}) \otimes I_{r_2} =
(X_1 \otimes I_{r_2}) \cdot (M_1 \otimes I_{r_2})^{-1}$ and
similarly,
$I_{r_1} \otimes (1/x \cdot D_2) = (I_{r_1} \otimes X_2) \cdot
(I_{r_1} \otimes M_2)^{-1}$.  Now let $X =
\begin{bmatrix}
  X_1 \otimes I_{r_2} & I_{r_1} \otimes X_2
\end{bmatrix}$, $M = \diag(M_1 \otimes I_{r_2}, I_{r_1} \otimes M_2)$ and
$Y =
\begin{bmatrix}
  I_{r_1 r_2} & I_{r_1 r_2}
\end{bmatrix}^t$.
One can check that $T = XM^{-1}Y$,
$\Delta = \det M = (\det M_1)^{r_2} (\det M_2)^{r_1}$
and $\deg \Delta = 2r_1r_2 +d_1 r_2 + d_2 r_1$.

\subsubsection{Strict properness}
\label{sec:symprod_strict_prop}

From~(\ref{eq:symprod_T}),
it suffices that
$D_i = O(1)$ for $i=1,2$ for $T$ to be strictly proper. We conclude as in
Section~\ref{sec:lclm_strict_properness}:  $T$ is strictly
proper when both $L_1$ and
$L_2$ do not admit $x= \infty$ as an irregular singularity.

\subsubsection{Degree bound}
\label{sec:sym_prod_degree_bound}

\begin{theorem}
  \label{thm:sym_prod_fuchsian_degree_bound}
  Let $L_1, L_2 \in k[x]\left<\DP_x\right>$ be of
  respective orders $r_1$ and $r_2$ and degrees $d_1$ and $d_2$.
  Suppose $x = \infty$ is not an irregular singularity of $L_1$ nor of~$L_2$.
  Then
  there exists a symmetric product $L = L_1 \otimes L_2$ of
  order $r \le r_1r_2$ and degree at
  most $r(2r_1r_2 +d_1 r_2 + d_2 r_1) - 1/2 \cdot r(r-1)$.
\end{theorem}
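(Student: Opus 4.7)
The plan is to mirror the structure used for the three previous instances (Theorems~\ref{thm:herm_red_generic_bound}, \ref{thm:algeqtodiffeq_generic_bound}, \ref{thm:LCLM_fuchsian_degree_bound}): reduce the computation of a symmetric product to Problem~\ref{pb:lin_rel_pseudo_lin_map}, exhibit a realisation of the associated matrix $T$ with explicit control on $\deg\det M$, verify that $T$ is strictly proper under the irregular-singularity hypothesis, and finally invoke Theorem~\ref{thm:bound_realisation}.

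First, I would invoke the reduction established in Section~\ref{sec:sym_prod_red_pb1}: the coefficients of a symmetric product $L = L_1 \otimes L_2$ of minimal order can be read off from the minimal linear relation between the iterates $\theta^\ell e_1$ of the pseudo-linear map $\theta = \DP_x + T$, where $T \in k(x)^{r_1 r_2 \times r_1 r_2}$ is the matrix defined in~\eqref{eq:symprod_T} and $e_1 = (1,0,\dots,0)^t$. In particular the order $r$ of $L$ satisfies $r \le r_1 r_2$.

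Next, I would apply the realisation $T = XM^{-1}Y$ constructed in Section~\ref{sec:symprod_realisation} via the block Kronecker decomposition $T = (X_1 \otimes I_{r_2})(M_1\otimes I_{r_2})^{-1} + (I_{r_1}\otimes X_2)(I_{r_1}\otimes M_2)^{-1}$, assembled into
\[
X = \begin{bmatrix} X_1 \otimes I_{r_2} & I_{r_1}\otimes X_2 \end{bmatrix},\quad M = \diag(M_1\otimes I_{r_2},\, I_{r_1}\otimes M_2),\quad Y = \begin{bmatrix} I_{r_1r_2}\\ I_{r_1r_2} \end{bmatrix},
\]
for which the determinantal degree has been computed explicitly: $\delta = \deg\det M = (\det M_1)^{r_2}(\det M_2)^{r_1}$ has degree $2r_1 r_2 + d_1 r_2 + d_2 r_1$.

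Under the hypothesis that $x=\infty$ is not an irregular singularity of either $L_1$ or $L_2$, the analysis of Section~\ref{sec:symprod_strict_prop} guarantees that $T$ is strictly proper, so Theorem~\ref{thm:bound_realisation} applies. Since $a = e_1$ has degree $d_a = 0$, the theorem yields a solution $\eta$ of Problem~\ref{pb:lin_rel_pseudo_lin_map} with $\deg \eta_i \le \rho\delta - (\rho(\rho+1)/2 - i)$ for $\rho = r$. Taking the maximum over $i \le r$ gives $\deg \eta_i \le r\,\delta - r(r-1)/2 = r(2r_1r_2 + d_1 r_2 + d_2 r_1) - r(r-1)/2$, which is precisely the claimed bound.

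There is no real obstacle here: all the technical ingredients (reduction to Problem~\ref{pb:lin_rel_pseudo_lin_map}, construction of the realisation, check of strict properness, and the master degree bound) have been prepared in the preceding subsections. The proof is therefore essentially an assembly, the only thing to track carefully being the substitution $d_a = 0$ and $\delta = 2r_1r_2 + d_1r_2 + d_2r_1$ into the uniform bound of Theorem~\ref{thm:bound_realisation}.
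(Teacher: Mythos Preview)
Your proposal is correct and follows essentially the same approach as the paper: the paper's own proof is a two-sentence invocation of strict properness (from Section~\ref{sec:symprod_strict_prop}) and of Theorem~\ref{thm:bound_realisation} with $\deg\Delta = 2r_1r_2 + d_1r_2 + d_2r_1$, which is exactly what you do in more detail. One notational slip: you wrote $\delta = \deg\det M = (\det M_1)^{r_2}(\det M_2)^{r_1}$, conflating $\det M$ with its degree; the intended statement is $\det M = (\det M_1)^{r_2}(\det M_2)^{r_1}$ and hence $\delta = \deg\det M = 2r_1r_2 + d_1r_2 + d_2r_1$.
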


\begin{proof}
  Under this assumption, $T$ is strictly proper. The result is
  implied by
  Theorem~\ref{thm:bound_realisation}
  with $\deg \Delta = 2r_1r_2 +d_1r_2 +d_2 r_1$.
\end{proof}

Our new bound, although restricted to the assumption that the point at infinity is
not irregular, leads to an
improvement of the bound in~\cite{kauers2014bounds_Dfinite_closure} by
one order of magnitude.  Recall that
$d_1 \ge r_1$ and $d_2 \ge r_2$ because of our assumption at infinity, so our bound is in
$O(d_1^2 d_2^2)$. This matches the asymptotic behavior we observe in
experiments. By contrast, the bound in~\cite{kauers2014bounds_Dfinite_closure}
behaves like $O(d_1^{2}d_2^{2}(d_1 +d_2))$ for this class of inputs.

Our approach also generalizes to the computation of a symmetric
product of $s \ge 2$ operators satisfying the same condition at infinity. Let
$L = L_1 \otimes \cdots \otimes L_s$ with $L_i$ of order $r_i$ and
degree at most $d$ for all $i$.
Let $R =  r_1 \cdots r_s$ and $r = \max_i(r_i)$.
By assumption, we have $r  \le d$.
Let
$\alpha = \alpha_1 \cdots \alpha_s$ where $L_i(\alpha_i) = 0$ for all
$i$.  By the Leibniz rule, one can express the
successive derivatives of~$\alpha$ as linear combinations of
$(\alpha_1^{(h_1)}\cdots \alpha_s^{(h_s)})$ for $0 \le h_i < r_i$,
$1 \le i \le s$.
Therefore $L$ can still be derived from a solution
of Problem~\ref{pb:lin_rel_pseudo_lin_map} with
$T \in k(x)^{R \times R}$. And one can write
$T = XM^{-1}Y$ generalizing the proof of
Theorem~\ref{thm:sym_prod_fuchsian_degree_bound}, with $\deg \det M
\le \sum_{i=1}^s (r_1 \cdots r_{i-1}(r_i + d)r_{i+1}\cdots r_s)$.
Then, $\deg \det M \le s(R + dr^{s-1})$ so $L$ is of order
at most $R$ and degree bounded by $sR(R + dr^{s-1}) = O(dr^{2s-1})$.
One should compare with the previous best known bound
$sR^2d = O(dr^{2s})$~\cite[Thm.~8]{kauers2014bounds_Dfinite_closure}.

\section{Direct bound by Cramer's rule}
\label{sec:direct_bound}

For the purpose of explaining our approach to show Theorem~\ref{thm:bound_realisation},
we start by proving a bound that is directly obtained from the
formulation of Problem~\ref{pb:lin_rel_pseudo_lin_map}: $\eta$ is a
solution of a linear system.  This is the bound one should expect by
adopting the most natural viewpoint on the input rational matrix
$T \in k(x)^{n\times n}$, \ie viewing $T$ as its numerator $N$ which
is a polynomial matrix divided by its denominator $\den \in k[x]$.

\begin{theorem}
  \label{thm:direct_bound_cramer}
  Let $T \in k(x)^{n \times n}$ with denominator $\den \in k[x]$, and
  $a \in k[x]^{n}$.  Let $d = \deg( \den)$, $D = \deg(\den \cdot T)$,
  $d_a = \deg a$ and $\tilde D = \max(d-1,D)$.  Then, there exists a
  solution $\eta = (\eta_0, \dots, \eta_\rho) \in k[x]^{\rho+1} \setminus \{ 0 \}$ 
  of
  Problem~\ref{pb:lin_rel_pseudo_lin_map} for $(T,a)$ of the
  form $\eta_i = \den^i p_i$ with $p_i \in k[x]$ of degree at most
  $\rho d_a + (1/2 \cdot \rho(\rho +1) - i) \tilde{D}$.
\end{theorem}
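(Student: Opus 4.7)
The plan is to clear denominators uniformly. Writing $T = N/\den$ with $N = \den \cdot T \in k[x]^{n \times n}$ of degree $D$, I first show by a simple induction on $i$ that $a_i = b_i/\den^i$ for some $b_i \in k[x]^n$, and derive the explicit recurrence
\[ b_{i+1} \;=\; \den \cdot \DP_x b_i \;-\; i\, \den' \cdot b_i \;+\; N \cdot b_i. \]
Each of the three summands has degree at most $\deg b_i + \tilde D$ with $\tilde D = \max(d-1, D)$, so starting from $\deg b_0 = d_a$ an immediate induction yields $\deg b_i \le d_a + i \tilde D$ for every $i \ge 0$.

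Under the substitution $\eta_i = \den^i p_i$, the target relation $\sum_{i=0}^\rho \eta_i a_i = 0$ becomes the purely polynomial relation $\sum_{i=0}^\rho p_i b_i = 0$. The problem thus reduces to exhibiting a polynomial right-kernel element of the matrix $B = [b_0 \mid \cdots \mid b_\rho] \in k[x]^{n \times (\rho+1)}$. By the very definition of $\rho$, the rank of $B$ over $k(x)$ equals $\rho$, so its right kernel is one-dimensional. I select $\rho$ rows of $B$ supporting a nonsingular $\rho \times \rho$ minor, giving a submatrix $B_S \in k[x]^{\rho \times (\rho+1)}$ of rank $\rho$; since $\ker B \subseteq \ker B_S$ and both kernels are one-dimensional, the two coincide.

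I then apply Cramer's rule to $B_S$: a generator of its kernel is the vector $(p_0, \dots, p_\rho)$ with $p_i = (-1)^i \det B_S^{(i)}$, where $B_S^{(i)}$ denotes the $\rho \times \rho$ submatrix obtained by removing column $i$ from $B_S$. The $j$-th column of $B_S^{(i)}$ (for $j \in \{0,\dots,\rho\} \setminus \{i\}$) has entries of degree at most $d_a + j \tilde D$, so the standard column-sum bound for determinants gives
\[ \deg p_i \;\le\; \sum_{j \ne i} (d_a + j \tilde D) \;=\; \rho d_a + (1/2 \cdot \rho(\rho+1) - i)\, \tilde D, \]
which is exactly the claimed bound. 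Setting $\eta_i = \den^i p_i$ produces a nonzero solution of Problem~\ref{pb:lin_rel_pseudo_lin_map} of the announced form.

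No serious obstacle is expected: the argument is essentially mechanical, combining a denominator-clearing induction with a Cramer-type extraction. The only mildly delicate point is to confirm that the $-i\, \den' \cdot b_i$ term arising from differentiation contributes $d-1$ rather than $d$ to the growth of $\deg b_i$; this is precisely the reason why $\tilde D = \max(d-1, D)$ rather than $\max(d, D)$ suffices in the statement.
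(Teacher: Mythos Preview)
Your proposal is correct and follows essentially the same approach as the paper's proof: both establish the same recurrence $b_{i+1} = \den \cdot b_i' - i\,\den'\cdot b_i + (\den\cdot T)\,b_i$ with the same degree bound $\deg b_i \le d_a + i\tilde D$, then extract the solution via Cramer's rule after restricting to $\rho$ well-chosen rows. The only cosmetic difference is the order of operations---you clear denominators first and read off the kernel vector as signed maximal minors of the polynomial matrix $B_S$, whereas the paper applies Cramer's rule to the rational system $\tilde K\nu = c$ and clears denominators afterwards; the resulting $p_i$ coincide up to sign.
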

\begin{proof} 
  By induction on $i \ge 0$, we show that $\theta^i a = b_i / \den^i$
  with $b_i \in k[x]^n$ of degree at most $d_a + i \tilde{D}$. This is
  true for $i=0$ and
  $\theta^{i+1}a = \DP_x(b_i / \den^{i}) + T \cdot b_i/\den^i=
  1/\den^{i+1} (\den \cdot b_i' -i \den' \cdot b_i + (\den\cdot T) b_i)$. So we
  have $b_{i+1} = \den \cdot b_i' -i \den' \cdot b_i + (\den \cdot T) b_i$, and
  $\deg(b_{i+1}) \le \deg(b_i) + \tilde D$, thus the claim holds.
  
  Next consider the solution $\nu \in k(x)^\rho$ of the linear system
  $K \nu = - \theta^\rho a$ where $K =
  \begin{bmatrix}
    a & \theta a & \cdots & \theta^{\rho-1} a
  \end{bmatrix}\in k(x)^{n \times \rho}$.
  The solution~$\eta$ of Problem~\ref{pb:lin_rel_pseudo_lin_map} can
  be read from $\nu$ just by multiplying by the denominator of $\nu$.
  By definition of $\rho$, $\rk(K) = \rho$ so after a deletion of
  $(n-\rho)$ appropriate rows in the previous system, we obtain an
  equivalent square system $\tilde K \nu = c$, where $\tilde K$ and
  $c$ are obtained from $K$ and $-\theta^\rho a$ after the deletion of
  rows.

  By Cramer's rule, we have $\nu_i = \det K_i / \det \tilde K$ for all
  $0 \le i < \rho$ where $K_i$ is obtained from $\tilde K$ after
  replacing column $i+1$ by $c$.  Moreover by a direct expansion we
  have $\det \tilde K = p/\den^{\rho(\rho-1)/2}$ with $p \in k[x]$ of
  degree at most $\rho d_a + 1/2 \cdot\rho(\rho-1) \tilde D$ and
  $\det K_i = p_i/\den^{\rho(\rho+1)/2-i}$ with $p_i \in k[x]$ of
  degree at most $\rho d_a + (1/2 \cdot\rho(\rho+1)-i) \tilde D$.  So
  finally for $0 \le i < \rho$, $\nu_i = p_i / (\den^{\rho-i} p) $.

  And thus $\eta_i = \den^\rho p\cdot \nu_i = \den^i p_i $ for
  $0 \le i < \rho$ and $\eta_\rho = \den^\rho p$, whence the result.
\end{proof}

In summary, the above proof (which is similar to the approach 
in~\cite[\S2]{bostan2013cyclic})
boils down to the study of some of the
minors (up to sign) of a matrix $K^*$ whose columns are iterates of
$\theta$,
\begin{align*}
  K^* =
  \begin{bmatrix}
    a &  \cdots & \theta^{\rho-1}a & \theta^{\rho}a
  \end{bmatrix}.
\end{align*}
And we just bound the size of each minor by simply expanding the
determinant and bounding the degrees in every column. For proving
Theorem~\ref{thm:bound_realisation}, we first show that if the input
matrix $T$ is strictly proper and
has a realisation $T =W + XM^{-1}Y$ then the minors of the
matrix $K^*$ can always be written with denominator dividing
$(\det M)^\rho$ (see
Proposition~\ref{prop:det_den_pseudo_krylov_matrix}).  Now let $D$ be
a degree bound on the denominator of $T$. If we have a \emph{small}
realisation of $T$ meaning that $\deg \det M = O(D)$ then all these
minors are actually smaller than naive expansions suggest.

\section{Matrices of rational functions}
\label{sec:matrices_rational_functions}

We follow the lines
of~\cite{coppel1974matratfun,kailath1980linear_systems}
for the main definitions and results we need here.

\subsection{Linearisation of matrix fraction description}
\label{sec:MFD_linearisation}

Let $T \in k(x)^{n \times n}$ with a
realisation~(\ref{eq:realisation}). Such a
realisation for $T$ is not unique.  For example, for any non-singular
$m \times m$ polynomial matrices $D_1$ and $D_2$, one has the
realisation
\begin{align}
  \label{eq:reducible_realisation}
  T = W + XD_2(D_1MD_2)^{-1}D_1Y.
\end{align}
Recall that a matrix $U \in k[x]^{n\times n}$ is unimodular if its
determinant is a nonzero element in $k$. When $D_1$ or $D_2$ are not
unimodular 
the
realisation~(\ref{eq:reducible_realisation}) can be \emph {reduced}.

Let $A,B$ be $m \times p$ and $m \times h$ polynomial matrices. A
matrix $D \in k[x]^{m\times m}$ is a \emph{common left divisor} of $A$ and $B$
if there exist polynomial matrices $A_1, B_1$ such that $A = D A_1$
and $B = DB_1$. The matrices $A$ and $B$ are said to be \emph{left
relatively prime} if their only left common divisors are
unimodular. Similarly, one can define \emph{common right divisors} and 
\emph{relatively right prime} matrices simply by taking transposes in the above
definitions.  Now a realisation~(\ref{eq:realisation}) of $T$ is said
to be \emph{irreducible} if $M, Y$ are left relatively prime and $M, X$ are
right relatively prime. One can always reduce a given realisation to
an irreducible one by the following.
\begin{proposition} \emph{\cite[Thm.~7-9]{coppel1974matratfun}}
  \label{prop:reduced_realization_and_MFD}
  If $T$ has the realisation~(\ref{eq:realisation}) then there exist
  polynomial matrices $X_0,M_0,Y_0$ of size
  $n\times m, m \times m, m\times n$ respectively such that $T$ has
  the following irreducible realisation
  \begin{align}
    \label{eq:irreducible_realisation}
    T = W + X_0M_0^{-1}Y_0.
  \end{align}
  Moreover there exist $N, D \in k[x]^{n \times n}$ such that
  \begin{align}
    \label{eq:irreducible_rightMFD}
    T = ND^{-1}
  \end{align}
  is an irreducible realisation ($N$ and $D$ are right relatively
  prime).
\end{proposition}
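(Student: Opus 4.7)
The plan is to construct the irreducible realisation by two successive coprime reductions of the given realisation $T = W + XM^{-1}Y$. My main algebraic tool is the existence of greatest common left divisors (GCLD) and greatest common right divisors (GCRD) for pairs of polynomial matrices of compatible sizes, together with the fact that such divisors factor out with polynomial quotients. Both rest on the Hermite or Smith normal form over the principal ideal domain $k[x]$ and are standard in the cited references.

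First I would perform a left reduction. Take a GCLD $G_L \in k[x]^{m \times m}$ of $M$ and $Y$, and write $M = G_L M'$, $Y = G_L Y'$ with $M', Y'$ left relatively prime. Since $\det M \neq 0$, $M'$ remains nonsingular, and the cancellation $(G_L M')^{-1} G_L = M'^{-1}$ shows that $M^{-1} Y = M'^{-1} Y'$, hence $T = W + X M'^{-1} Y'$. Next I would perform a right reduction. Take a GCRD $G_R \in k[x]^{m \times m}$ of $M'$ and $X$, and write $M' = M_0 G_R$, $X = X_0 G_R$ with $M_0, X_0$ right relatively prime. Then $X M'^{-1} = X_0 G_R (M_0 G_R)^{-1} = X_0 M_0^{-1}$, giving $T = W + X_0 M_0^{-1} Y_0$ with $Y_0 := Y'$. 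What remains is to check that left coprimality survives this second step: if $G$ is a common left divisor of $M_0$ and $Y_0$, writing $M_0 = G \tilde M_0$ and $Y_0 = G \tilde Y_0$ yields $M' = G(\tilde M_0 G_R)$ and $Y' = G \tilde Y_0$, so $G$ is also a common left divisor of $M', Y'$ and is therefore unimodular by the first step.

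For the second part, I would manufacture an initial right matrix fraction description of $T$ directly from the realisation and then apply a single GCRD reduction. Using the adjugate identity $M^{-1} = (\det M)^{-1} \mathrm{adj}(M)$ we obtain
\begin{equation*}
  T = \bigl( (\det M)\, W + X\, \mathrm{adj}(M)\, Y \bigr) \cdot \bigl((\det M)\, I_n\bigr)^{-1},
\end{equation*}
which is of the form $\tilde N \tilde D^{-1}$ with $\tilde N, \tilde D \in k[x]^{n \times n}$. A GCRD $G$ of $\tilde N$ and $\tilde D$ then yields $\tilde N = N G$ and $\tilde D = D G$ with $N, D$ right relatively prime, and $T = N D^{-1}$ is the desired irreducible right MFD.

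The only real obstacle is the algebraic groundwork for GCLD/GCRD in polynomial matrix rings, namely that the divisor exists as a polynomial matrix and that the quotient is itself polynomial, which is a direct consequence of the Hermite/Smith normal form over $k[x]$ and is covered by the cited references. The small structural verification I already carried out above, showing that right division by $G_R$ in the second step does not destroy the left coprimality produced by the first, is the one step that is specific to this chained reduction argument and that I expect to be the trickiest piece of bookkeeping.
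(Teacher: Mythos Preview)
The paper does not give its own proof of this proposition; it cites \cite[Thm.~7--9]{coppel1974matratfun} and only remarks afterward that the argument is effective. Your proposal supplies exactly such an effective argument and is correct: the two-step reduction (GCLD of $M,Y$ followed by GCRD of $M',X$), together with your verification that the right reduction preserves the left coprimality obtained in the first step, is the standard construction, and writing $T=\tilde N\tilde D^{-1}$ via the adjugate with $\tilde D=(\det M)I_n$ before a single GCRD reduction is the natural route to an irreducible right MFD. Nothing is missing.
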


\noindent A realisation of the form~(\ref{eq:irreducible_rightMFD}) is
also called an irreducible (right) \emph{matrix fraction description} (MFD)
for $T$.  The proof of
Proposition~\ref{prop:reduced_realization_and_MFD} is effective, in the sense
that irreducible
realisations~(\ref{eq:irreducible_realisation})
or~(\ref{eq:irreducible_rightMFD}) can be computed from an arbitrary realisation for
$T$.

Now given an MFD~(\ref{eq:irreducible_rightMFD}) for $T$, one can
\emph{linearise} it into a simple form called a \emph{state-space
realisation}~\cite{kailath1980linear_systems}.
\begin{proposition} \emph{\cite[Sec.~6.4]{kailath1980linear_systems}}
  \label{prop:state_space_realisation}
  Given a right MFD: $T = N D^{-1} \in k(x)^{n \times n}$,
  let $m = \deg \det D$, one can
  always construct a so-called \emph{state-space} realisation
  \begin{align}
    \label{eq:state-space_realisation}
    T(x) = W(x) + B(xI_m - A)^{-1}C, 
  \end{align}
  with $W \in k[x]^{n \times n}$ and $B,A,C$ matrices with entries
  in $k$ of size $n \times m, m \times m, m \times n$
  and $I_m$ the identity matrix of size $m$.
\end{proposition}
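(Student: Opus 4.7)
The plan is to follow the classical controller-form construction from linear systems theory, adapted to the purely algebraic setting used here. The argument proceeds in three main steps.

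First, I would reduce to the case where $T$ is strictly proper. Starting from $T = ND^{-1}$, one can perform a right matrix polynomial division to write $N = WD + N_1$ with $W \in k[x]^{n \times n}$ and $N_1 D^{-1}$ strictly proper; the matrix $W$ is the polynomial part of $T$ at infinity, which is well-defined because $T$ is rational. Then $T = W + N_1 D^{-1}$, so it suffices to exhibit $A,B,C$ over $k$ of the prescribed sizes satisfying $N_1 D^{-1} = B(xI_m - A)^{-1} C$.

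Second, I would put $D$ into column-reduced form. There exists a unimodular matrix $U \in k[x]^{n \times n}$ such that $DU$ is column-reduced, with column degrees $k_1, \dots, k_n$ summing to $\deg \det (DU) = \deg \det D = m$. Since $N_1 D^{-1} = (N_1 U)(DU)^{-1}$, I may replace $(N_1, D)$ by $(N_1 U, DU)$ without altering the rational matrix or the target size $m$. This step is what makes the construction produce a realisation of size exactly $m$ rather than the naive $n \cdot \deg D$.

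Third, I would construct $A,B,C$ explicitly. Write $D(x) = D_{hc} \Psi(x) + D_{lr}(x)$ where $\Psi(x) = \diag(x^{k_1}, \dots, x^{k_n})$ and $D_{hc} \in \GL_n(k)$ is the matrix of column-leading coefficients (invertible because $D$ is column-reduced), and $D_{lr}$ has strictly smaller column degrees. Define $A \in k^{m \times m}$ as a block diagonal matrix whose $i$-th block is a $k_i \times k_i$ companion-type block with its last row encoding the coefficients of $-D_{hc}^{-1} D_{lr}$ appearing in column $i$; define $C \in k^{m \times n}$ by placing the rows of $D_{hc}^{-1}$ at the top of each $k_i$-block with zeros elsewhere, and define $B \in k^{n \times m}$ by reading off the coefficients of $N_1$ in the monomial basis compatible with the column degrees. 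The verification that $B(xI_m - A)^{-1} C = N_1 D^{-1}$ reduces to the block-by-block identity $(xI_m - A)\Phi(x) = $ (structured polynomial matrix) where $\Phi(x)$ is the natural block polynomial matrix whose $i$-th block column is $(1, x, \dots, x^{k_i - 1})^T$.

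The main obstacle is the index bookkeeping in the last step: aligning the size $m = \sum k_i$ of $A$ with the column degree structure of $D$, and matching the coefficients of the remainder $N_1$ (viewed as a strictly proper rational matrix with respect to the column degrees of $D$) with the entries of $B$. Once the column-reduced form is established, the verification is an elementary linear-algebraic computation; the substantive structural inputs are Proposition~\ref{prop:reduced_realization_and_MFD}, which ensures the existence of the right MFD, together with the standard fact that any non-singular polynomial matrix can be brought to column-reduced form by unimodular right multiplication.
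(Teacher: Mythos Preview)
The paper does not give its own proof of this proposition: it is stated with a citation to \cite[Sec.~6.4]{kailath1980linear_systems} and used as a black box. Your proposal is precisely the classical controller-form construction from that reference, so in effect you are supplying the argument the paper chose to omit.

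Your sketch is correct. A couple of points worth tightening if you write it out in full: the right polynomial division in Step~1 as stated requires $D$ to already be column-reduced, but your parenthetical workaround (defining $W$ as the polynomial part of $T$ at infinity and then $N_1 = N - WD$) avoids this and is the cleaner route. In Step~3 you should make explicit the fact that, once $DU$ is column-reduced with column degrees $k_1,\dots,k_n$, the strict properness of $(N_1 U)(DU)^{-1}$ forces column~$j$ of $N_1 U$ to have degree $< k_j$ (the predictable-degree property), which is exactly what is needed for the coefficients of $N_1 U$ to fill an $n \times m$ constant matrix $B$. With those two clarifications the argument is complete and matches the cited source.
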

\noindent A state-space realisation~(\ref{eq:state-space_realisation})
is a particular example of a general
realisation~(\ref{eq:realisation}). When $T$ is strictly proper, one
has $W = 0$ in~(\ref{eq:state-space_realisation}) and the idea of
linearisation comes from the fact that we only have a polynomial
matrix of degree 1 involved.  We use a state-space realisation of
$T$ in our proof since this form is convenient when dealing with
differentiation. In particular, if $T$ is strictly proper, it has a simple
derivative $T' = -B(xI_m -A)^{-2}C$.

\subsection{Determinantal denominators}
\label{sec:determinantal_denominators}

We now recall results about the denominator of minors of
rational matrices.
For a rational matrix $R \in k(x)^{m \times p}$ and a positive integer~$\ell$, 
we denote by $\varphi_\ell(R)$ the monic least common
denominator of all minors of $R$ of size at most $\ell$.  We also set
$\varphi_0(R) = 1$. The polynomials $\varphi_\ell(R)$ are called the
\emph{determinantal denominators} of~$R$~\cite{coppel1974matratfun}.
Note that $\varphi_1(R)$ is the monic least common denominator of the
entries of $R$. So one can write $R = 1/\varphi_1(R) \cdot N$ with $N$
a polynomial matrix.
By definition, for $\ell \ge 0$, $\varphi_\ell(R)$ divides
$\varphi_{\ell +1}(R)$ and $\varphi_\ell(R) = \varphi_{\ell+1}(R)$ for
$\ell \ge \rk(R)$. Moreover by a direct expansion of the determinant,
$\varphi_\ell(R)$ divides
$\varphi_1(R)^\ell$.  We recall how determinantal
denominators behave with respect to sums, products and inverse of matrices.

\begin{proposition} \emph{\cite[Thm.~1-2]{coppel1974matratfun}}
  \label{prop:det_den_sum_prod}
  If a rational matrix $R$ is the sum $R_1 + R_2$ of two rational matrices
  $R_1, R_2$, then for all $\ell$, $\varphi_\ell(R)$ divides
  $\varphi_\ell(R_1)\varphi_\ell(R_2)$.  Moreover, if $\varphi_1(R_1)$
  and $\varphi_1(R_2)$ are coprime then equality holds.   Similarly
  if $R$ is the product $R_1R_2$ of two rational matrices $R_1, R_2$ then for all
  $\ell$, $\varphi_\ell(R)$ divides
  $\varphi_\ell(R_1)\varphi_\ell(R_2)$.
\end{proposition}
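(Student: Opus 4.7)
The plan is to reduce each claim to a direct examination of the individual $\ell \times \ell$ minors of $R$, using two classical identities: multilinearity of the determinant for the sum case, and the Cauchy--Binet formula for the product case. In both cases the goal is to write each minor of $R$ as a sum of products of smaller minors drawn from $R_1$ and $R_2$ separately, so that their denominators can be bounded by determinantal denominators of the summands.

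For the sum, I fix an $\ell \times \ell$ submatrix of $R = R_1 + R_2$, whose columns are termwise sums from $R_1$ and $R_2$. Expanding its determinant by multilinearity in the columns produces $2^\ell$ \emph{mixed} determinants, one for each choice of source column by column. A mixed determinant with $j$ columns from $R_1$ and $\ell - j$ from $R_2$ can be rewritten, via a Laplace expansion along the $R_1$-columns, as an alternating sum of products of a $j \times j$ minor of $R_1$ with an $(\ell - j) \times (\ell - j)$ minor of $R_2$. Each such product has denominator dividing $\varphi_j(R_1)\varphi_{\ell - j}(R_2)$, and therefore $\varphi_\ell(R_1)\varphi_\ell(R_2)$, because $\varphi_k$ divides $\varphi_{k+1}$. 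Aggregating the $2^\ell$ contributions, every $\ell \times \ell$ minor of $R$ has $\varphi_\ell(R_1)\varphi_\ell(R_2)$ as a denominator, and so does their least common denominator $\varphi_\ell(R)$.

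For the equality claim, I exploit symmetry. Since $\varphi_\ell(R_i)$ divides $\varphi_1(R_i)^\ell$ for $i = 1, 2$, coprimality of $\varphi_1(R_1)$ and $\varphi_1(R_2)$ propagates to coprimality of $\varphi_\ell(R_1)$ and $\varphi_\ell(R_2)$. Applying the divisibility just proved to $R_1 = R + (-R_2)$ yields $\varphi_\ell(R_1) \mid \varphi_\ell(R)\varphi_\ell(R_2)$, and coprimality then forces $\varphi_\ell(R_1) \mid \varphi_\ell(R)$; the symmetric argument gives $\varphi_\ell(R_2) \mid \varphi_\ell(R)$. Coprimality upgrades both divisibilities to $\varphi_\ell(R_1)\varphi_\ell(R_2) \mid \varphi_\ell(R)$, matching the other direction.

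The product case follows directly from the Cauchy--Binet formula: for rows $I$ and columns $J$ of size $\ell$, the $(I,J)$-minor of $R = R_1 R_2$ equals the sum over $\ell$-subsets $K$ of the intermediate index set of an $(I,K)$-minor of $R_1$ times a $(K,J)$-minor of $R_2$. Each summand has denominator dividing $\varphi_\ell(R_1)\varphi_\ell(R_2)$, and so does the sum and hence $\varphi_\ell(R)$. The main step requiring care is the Laplace expansion of the mixed determinants in the sum case: one has to verify that separating columns by their origin produces only products of minors of $R_1$ with minors of $R_2$, so that the key monotonicity $\varphi_j(R_i) \mid \varphi_\ell(R_i)$ can be invoked; the Cauchy--Binet step and the coprimality bookkeeping are then essentially automatic.
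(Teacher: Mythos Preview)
Your proof is correct. The paper does not supply its own proof of this proposition: it is quoted from Coppel's 1974 paper and stated without argument, so there is no internal proof to compare against. Your argument via column-multilinearity plus Laplace expansion for the sum, the coprimality symmetry trick $R_1 = R + (-R_2)$ for the equality, and Cauchy--Binet for the product is the standard route and matches what one finds in the cited reference. One minor remark: since $\varphi_\ell(R)$ is defined as the least common denominator of minors of size \emph{at most} $\ell$, strictly speaking you should also note that your Cauchy--Binet and Laplace arguments apply verbatim at every size $j \le \ell$, so that the smaller minors are covered as well; this is immediate from what you wrote but worth making explicit.
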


\begin{proposition} \emph{\cite[Thm.~4]{coppel1974matratfun}}
  \label{prop:det_den_inverse}
  Let $R$ be a non-singular $m \times m$ rational matrix.
  If $\det R = c \cdot  \alpha / \beta$ with $c \in k^*$,
  $\alpha, \beta \in k[x]$ monic,
  then $\beta \varphi_m(R^{-1}) = \alpha \varphi_m(R)$.
\end{proposition}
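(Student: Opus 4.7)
My plan is to verify the identity $\beta \varphi_m(R^{-1}) = \alpha \varphi_m(R)$ by comparing $p$-adic valuations at every irreducible polynomial $p \in k[x]$. Since both sides are monic polynomials in $k[x]$, agreement of all their valuations forces equality. No coprimality reduction on $\alpha,\beta$ is actually needed, as the computation will make clear.

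Fix an irreducible $p$ with valuation $v_p$. For an $m \times m$ rational matrix $A$ and $0 \le k \le m$, let $\mu_k(A)$ be the minimum of $v_p(\det A_{I,J})$ over nonzero minors with $|I|=|J|=k$, with $\mu_0(A) = v_p(1) = 0$. From the definition of the least common denominator one reads off
\begin{align*}
v_p(\varphi_m(A)) = \max\bigl(0,\, -\mu_1(A),\,\ldots,\,-\mu_m(A)\bigr).
\end{align*}
The central input is the Sylvester--Jacobi identity: for $|I|=|J|=k$,
\begin{align*}
\det(R^{-1})_{I,J} = \pm\, \det R_{J^c,\,I^c}\,/\,\det R.
\end{align*}
Taking valuations and minimising over $(I,J)$ gives, for $k = 1, \ldots, m$,
\begin{align*}
\mu_k(R^{-1}) = \mu_{m-k}(R) - v_p(\alpha) + v_p(\beta).
\end{align*}

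I then substitute this into both $v_p(\alpha\,\varphi_m(R))$ and $v_p(\beta\,\varphi_m(R^{-1}))$. Setting $\mu' = \min(0,\mu_1(R),\ldots,\mu_{m-1}(R)) \le 0$, the relation above yields $\min_{k=1}^m \mu_k(R^{-1}) = \mu' - v_p(\alpha) + v_p(\beta)$. A short $\max/\min$ manipulation then shows that both sides of the target identity simplify, at the prime $p$, to $\max\bigl(v_p(\alpha)-\mu',\; v_p(\beta)\bigr)$, which concludes the argument.

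The main obstacle is a boundary bookkeeping subtlety. When rewriting $\min_{k=1}^m \mu_k(R^{-1})$ via the Jacobi relation, the index $k=m$ on the $R^{-1}$-side corresponds to $j=0$ on the $R$-side, so the term $\mu_0(R)=0$ must be included in the resulting minimum. The example $R = pI_m$ makes this visible: there $\mu_m(R^{-1}) = -m$ whereas $\min_{j=1}^{m-1}\mu_j(R) = 1$, and omitting the $j=0$ contribution gives a wrong answer. Once this convention is in place, the remaining verification is purely mechanical.
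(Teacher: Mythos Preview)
Your proof is correct. The valuation bookkeeping checks out: with $a=v_p(\alpha)$, $b=v_p(\beta)$, $\nu_j=\mu_j(R)$, and $\mu'=\min(0,\nu_1,\dots,\nu_{m-1})$, one has $\nu_m=a-b$, so
\[
v_p(\alpha\,\varphi_m(R)) = a - \min(\mu',\,a-b) = \max(a-\mu',\,b),
\]
while on the other side $\min_{k=1}^m \mu_k(R^{-1}) = \mu' - a + b$ gives
\[
v_p(\beta\,\varphi_m(R^{-1})) = b + \max(0,\,-\mu'+a-b) = \max(b,\,a-\mu').
\]
Your remark about including the $j=0$ term (coming from $k=m$ on the inverse side) is exactly the point where a careless argument would break, and you handle it correctly.

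As for comparison with the paper: the paper does not give its own proof of this proposition---it simply quotes the result from Coppel~\cite[Thm.~4]{coppel1974matratfun}. Your argument via local valuations and the Jacobi complementary-minor identity is a clean self-contained proof; it is in the same spirit as Coppel's treatment (which is also a local/Smith--McMillan analysis), but you have supplied what the paper omits.
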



\begin{proposition} \emph{\cite[Thm.~10]{coppel1974matratfun}}
  \label{prop:det_den_realisation}
  Let $T = W + XM^{-1}Y$ be a realisation of the
  form~(\ref{eq:realisation}).  For all $\ell \ge 0$,
  $\varphi_\ell(T)$ divides $\varphi_\ell(M^{-1})$.
  If the realisation is irreducible then equality holds.
\end{proposition}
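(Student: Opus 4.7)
The first assertion follows immediately from Proposition~\ref{prop:det_den_sum_prod}. Since $W, X, Y$ are polynomial matrices, their determinantal denominators are all $1$. Applying the sum rule to $T = W + XM^{-1}Y$ and then the product rule twice to $XM^{-1}Y$ gives
\[
\varphi_\ell(T) \mid \varphi_\ell(W) \cdot \varphi_\ell(XM^{-1}Y) \mid \varphi_\ell(X)\, \varphi_\ell(M^{-1})\, \varphi_\ell(Y) = \varphi_\ell(M^{-1}).
\]

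For the equality in the irreducible case, my plan is a three-step reduction. First, I would reduce to $W = 0$: since $\varphi_1(W) = 1$ is coprime to $\varphi_1(XM^{-1}Y)$, the equality clause of the sum rule in Proposition~\ref{prop:det_den_sum_prod} gives $\varphi_\ell(T) = \varphi_\ell(XM^{-1}Y)$, so it suffices to prove $\varphi_\ell(XM^{-1}Y) = \varphi_\ell(M^{-1})$. Second, I would bring $M$ to Smith normal form: write $M = P S Q$ with $P, Q$ unimodular and $S = \diag(s_1, \ldots, s_m)$ with $s_1 \mid s_2 \mid \cdots \mid s_m$. Setting $X' = X Q^{-1}$ and $Y' = P^{-1} Y$ (still polynomial because $P, Q$ are unimodular), one has $X M^{-1} Y = X' S^{-1} Y'$. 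A direct verification shows that the pair $(M, X)$ is right coprime iff $(S, X')$ is, and similarly $(M, Y)$ is left coprime iff $(S, Y')$ is. Since $\det M$ and $\det S$ agree up to a unit, Proposition~\ref{prop:det_den_inverse} also gives $\varphi_m(M^{-1}) = \varphi_m(S^{-1})$, and the same holds at every level $\ell$. The problem is thus reduced to the diagonal case.

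Third, in the diagonal case I would compute a size-$\ell$ minor of $X' S^{-1} Y'$ via Cauchy-Binet: for row set $R$ and column set $C$ of size $\ell$,
\[
\det\bigl( (X' S^{-1} Y')_{R,C} \bigr) = \sum_{|I| = \ell}\; \det(X'_{R,I})\; \Bigl( \prod_{i \in I} s_i \Bigr)^{-1}\; \det(Y'_{I,C}).
\]
The quantity $\varphi_\ell(S^{-1})$ equals $s_m s_{m-1} \cdots s_{m-\ell+1}$ up to a unit, so one needs to exhibit $R, C$ such that, after summation, the denominator is exactly this product. The natural candidate is $I = \{m-\ell+1, \ldots, m\}$: the term for this $I$ has the desired denominator, and for any other $I$ the denominator is a strict divisor of $s_m \cdots s_{m-\ell+1}$.

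The main obstacle is the non-cancellation argument. One must show that, for a well chosen $(R, C)$, the summand corresponding to $I = \{m-\ell+1, \ldots, m\}$ cannot be killed by the other terms after reduction to a common denominator, \ie that the numerator $\det(X'_{R,I})\det(Y'_{I,C})$ contributes a factor that is coprime to $s_m \cdots s_{m-\ell+1}$. This is where the coprimeness hypotheses are used: for each irreducible $p$ dividing $s_j$ with $j \ge m-\ell+1$, right coprimeness of $(S, X')$ forbids $p$ from dividing all maximal minors of $X'$ built on the last $\ell$ columns, and symmetrically for $Y'$. Organising this local-global argument prime-by-prime, and choosing $R, C$ simultaneously witnessing the non-divisibility for every prime factor of $s_m \cdots s_{m-\ell+1}$, is the technical heart of the proof.
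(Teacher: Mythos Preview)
The paper does not prove this proposition at all: it is quoted from Coppel and used as a black box, so there is no in-paper argument to compare against.

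On the merits of your proposal: the first assertion is handled correctly and cleanly via Proposition~\ref{prop:det_den_sum_prod}. For the equality clause, your two reductions (drop $W$ using the equality case of the sum rule, then pass to the Smith form of $M$) are valid, and attacking the diagonal case by Cauchy--Binet is the natural route.

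Two comments on the final step. First, the worry about a single pair $(R,C)$ witnessing every prime simultaneously is unnecessary: $\varphi_\ell$ is a least common multiple of denominators, so it suffices, for each prime power $p^k$ dividing $\varphi_\ell(S^{-1})$, to exhibit \emph{some} minor of size $\le \ell$ whose denominator carries $p^k$; different primes may use different minors. Second, and more substantively, your key claim---that right coprimeness of $(S,X')$ forces some $\ell\times\ell$ minor of $X'$ on the last $\ell$ columns to be a $p$-unit---is false as stated. Take $m=3$, $\ell=2$, $S=\diag(1,1,p)$ and $X'=\bigl[\begin{smallmatrix}1&0&0\\0&0&1\end{smallmatrix}\bigr]$: the pair $(S,X')$ is right coprime (the minor of $\bigl[\begin{smallmatrix}S\\X'\end{smallmatrix}\bigr]$ using rows $1,2$ of $S$ and row $2$ of $X'$ equals $1$), yet the unique $2\times2$ minor of $X'$ on columns $\{2,3\}$ vanishes. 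What coprimeness actually gives, after localising at $p$, is a $p$-unit minor of $X'$ on some column set $\bar J\supseteq\{j:p\mid s_j\}$; this set need not be $\{m-\ell+1,\dots,m\}$, and its size need not be $\ell$. The fix is to let the minor size vary with $p$ (use size $\ell_p=\lvert\{j:p\mid s_j\}\rvert$ when $\ell_p\le\ell$, and descend by Laplace expansion when $\ell_p>\ell$), and to handle separately the case where several index sets $I$ tie for the maximal $p$-valuation in the Cauchy--Binet sum, which requires a short rank argument modulo $p$ rather than isolating a single leading term.
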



\begin{corollary}
  \label{cor:det_den_realisation_detM}
   Let $T = W + XM^{-1}Y$ be a realisation of the
  form~(\ref{eq:realisation}). For all $\ell \ge 0$,
  $\varphi_\ell(T)$ divides $\det M$.
\end{corollary}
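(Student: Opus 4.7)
The plan is to combine Proposition~\ref{prop:det_den_realisation} with Proposition~\ref{prop:det_den_inverse}, using the ascending divisibility chain on determinantal denominators. By Proposition~\ref{prop:det_den_realisation}, $\varphi_\ell(T)$ divides $\varphi_\ell(M^{-1})$, so the whole task reduces to showing that $\varphi_\ell(M^{-1})$ divides $\det M$ (up to scalar normalisation, since determinantal denominators are monic).

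Next, by the remark just after the definition of the $\varphi_\ell$'s (in Section~\ref{sec:determinantal_denominators}), $\varphi_\ell(M^{-1})$ divides $\varphi_{\ell+1}(M^{-1})$ for every $\ell$ and this chain stabilises at $\ell = \rk(M^{-1}) = m$. Hence $\varphi_\ell(M^{-1})$ divides $\varphi_m(M^{-1})$, and it suffices to identify $\varphi_m(M^{-1})$ with the monic associate of $\det M$.

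To do so, I would apply Proposition~\ref{prop:det_den_inverse} with $R = M$. Since $M$ is a polynomial matrix, its single $m \times m$ minor $\det M$ lies in $k[x]$, so $\varphi_m(M) = 1$. Writing $\det M = c \, \alpha$ with $c \in k^\star$ and $\alpha \in k[x]$ monic gives the rational fraction representation $\det M = c \alpha / \beta$ with $\beta = 1$, coprime to $\alpha$. Proposition~\ref{prop:det_den_inverse} then yields $\varphi_m(M^{-1}) = \alpha$, i.e.\ the monic form of $\det M$. Combining the three divisibilities gives that $\varphi_\ell(T)$ divides $\det M$, as claimed.

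The whole argument is a short assembly of the results already recalled in Section~\ref{sec:matrices_rational_functions}, so I do not anticipate any real obstacle; the only mild subtlety is keeping track of the monic normalisation convention so that "divides $\det M$" is unambiguous as a statement in $k[x]$.
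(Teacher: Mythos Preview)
Your proof is correct and follows essentially the same route as the paper's own argument: apply Proposition~\ref{prop:det_den_realisation} to pass from $\varphi_\ell(T)$ to $\varphi_\ell(M^{-1})$, use the divisibility chain to reach $\varphi_m(M^{-1})$, and then invoke Proposition~\ref{prop:det_den_inverse} together with $\varphi_m(M)=1$ (since $M$ is polynomial) to identify $\varphi_m(M^{-1})$ with the monic associate of $\det M$. Your extra remark on the monic normalisation is a welcome clarification but changes nothing in substance.
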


\begin{proof}
  By Proposition~\ref{prop:det_den_realisation}, $\varphi_\ell(T)$
  divides $\varphi_\ell(M^{-1})$, which
  divides $\varphi_m(M^{-1})$ as $M$ is an $m \times m$ matrix.
  Next, by Proposition~\ref{prop:det_den_inverse},
  $\varphi_m(M^{-1})$ is equal to $\det M \cdot \varphi_m(M)$ up to
  a nonzero constant and
  $\varphi_m(M) = 1$ since $M$ is a polynomial matrix.
\end{proof}
\noindent Proposition~\ref{prop:det_den_realisation}
and Corollary~\ref{cor:det_den_realisation_detM} also
apply for MFD~(\ref{eq:irreducible_rightMFD}) or state-space
realisation~(\ref{eq:state-space_realisation})
as they are also examples of realisation~(\ref{eq:realisation}).

\section{Degree bound}
\label{sec:deg_bound}

In this section, we prove our main result
(Theorem~\ref{thm:bound_realisation}).
Let $T \in k(x)^{n\times n}$ be a strictly
proper matrix with realisation~(\ref{eq:realisation}).
We define the operator $\theta = \DP_x +T$
and take $a \in k[x]^n$.
We aim at
bounding the degrees in the coefficients of the solution of
Problem~\ref{pb:lin_rel_pseudo_lin_map} for $(T,a)$.
Let $\rho = \dim(\spn_{k(x)}(\theta^i a, i \ge 0))$ and
$\Delta = \det M$.

By Propositions~\ref{prop:reduced_realization_and_MFD} and~\ref{prop:state_space_realisation}
and the strict properness assumption,
one can write an irreducible
state-space realisation~(\ref{eq:state-space_realisation})
for $T$ in the form
\begin{align}
  \label{eq:state-space_realisation_strictly_proper}
  \tag{\ref{eq:state-space_realisation}$'$}
  T = B(xI_m - A)^{-1}C \eqcolon BL^{-1}C,
\end{align}
with $A,B,C$ scalar matrices.
Since the above state-space realisation is irreducible, we have
$\varphi_\ell(T) = \varphi_\ell(L^{-1})$ for all $\ell \ge 0$ by
Proposition~\ref{prop:det_den_realisation} and this divides $\Delta$
by Corollary~\ref{cor:det_den_realisation_detM}.
The degree bound is a consequence of the following result.
\begin{proposition}
  \label{prop:det_den_pseudo_krylov_matrix}
  Let $s_1 \le \cdots \le s_r$ be non-negative integers and consider the
  following matrix of iterates of $\theta$:
  \begin{align}
    \label{eq:pseudo_krylov_matrix}
    K =
    \begin{bmatrix}
      \theta^{s_1}a & \cdots & \theta^{s_r} a
    \end{bmatrix} \in k(x)^{n \times r}.
  \end{align}
  Then, $\varphi_\ell(K)$ divides $\Delta^{s_r}$ for all $\ell \ge 0$. 
\end{proposition}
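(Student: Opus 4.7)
The plan is to first reduce to the full pseudo-Krylov matrix $K^\ast = [a, \theta a, \ldots, \theta^s a]$ with $s = s_r$: any matrix $K$ of the form~(\ref{eq:pseudo_krylov_matrix}) is obtained from $K^\ast$ by a constant column-selection matrix $S$ of size $(s+1) \times r$, so Proposition~\ref{prop:det_den_sum_prod} gives $\varphi_\ell(K) \mid \varphi_\ell(K^\ast)\,\varphi_\ell(S) = \varphi_\ell(K^\ast)$. It then suffices to show $\varphi_\ell(K^\ast) \mid \Delta^s$. I would then exploit the irreducible state-space realisation $T = BL^{-1}C$ with $L = xI_m - A$, and introduce auxiliary vectors $u_i := L^{-1}C\,\theta^i a \in k(x)^m$. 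By construction, these satisfy the algebraic identity $L u_i = C\,\theta^i a$ together with the differential recurrence $\theta^{i+1} a = (\theta^i a)' + B u_i$; the former is a purely polynomial relation, while the latter ``algebraicises'' one application of $\theta$ modulo a single derivative.

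The technical core is then to exhibit a realisation of $K^\ast$ in the sense of~(\ref{eq:realisation}) of the form $K^\ast = \hat X\,\hat M^{-1}\,\hat Y$, where $\hat X, \hat M, \hat Y$ are polynomial matrices and $\det \hat M = \Delta^s$ up to a nonzero scalar. Granted such a realisation, Corollary~\ref{cor:det_den_realisation_detM} (applied to the rectangular matrix $K^\ast$) yields $\varphi_\ell(K^\ast) \mid \det \hat M = \Delta^s$, which is exactly the claim. A natural candidate for $\hat M$ is a block-triangular matrix whose block-diagonal entries are $s$ copies of $L$, with off-diagonal blocks encoding the recurrence linking the extended states $(\theta^i a, u_i)$ and $(\theta^{i+1} a, u_{i+1})$; its determinant is then $(\det L)^s = \Delta^s$, displaying the ``shared denominator'' behaviour that the bound requires.

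The main obstacle will be implementing this construction while handling the derivative in the recurrence, which is not an algebraic operation and is therefore a priori incompatible with the realisation framework. I expect this to be overcome by differentiating the identity $L u_i = C\,\theta^i a$ to obtain $L u_i' + u_i = C\,(\theta^i a)'$, which expresses $(\theta^i a)'$ in terms of the polynomial data $L, u_i, u_i'$; iterating this identity and augmenting the realisation with columns for successive derivatives of the auxiliary states should remove every differentiation and leave a purely polynomial matrix equation. By contrast, the straightforward factorisation $K^\ast = [a, \Delta\,\theta a, \ldots, \Delta^s\,\theta^s a] \cdot \diag(1, \Delta, \ldots, \Delta^s)^{-1}$, which is polynomial since each $\theta^i a$ has denominator dividing $\Delta^i$, yields only the much weaker bound $\varphi_\ell(K^\ast) \mid \Delta^{\ell s - \ell(\ell-1)/2}$; recovering the linear-in-$s$ exponent forces the denominator factors $L$ to be \emph{shared} across the columns via the block construction above rather than accumulated column by column.
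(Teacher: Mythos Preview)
Your high-level strategy is sound and, once completed, would indeed prove the proposition via Corollary~\ref{cor:det_den_realisation_detM}: the reduction to the full Krylov matrix $K^\ast=[a,\theta a,\dots,\theta^s a]$ is correct, and a realisation $K^\ast=W+\hat X\hat M^{-1}\hat Y$ with $\det\hat M$ dividing $\Delta^s$ is precisely what one needs. In fact the paper's proof can be repackaged in exactly this form, with $\hat M$ block upper-triangular having $s$ copies of $L$ on the diagonal.

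The gap lies in your choice of auxiliary states and in the derivative-handling step. With $u_i=L^{-1}C\,\theta^i a$ one gets $\theta^i a=a^{(i)}+B\sum_{j=0}^{i-1}u_j^{(i-1-j)}$, so the columns of $K^\ast$ are polynomial combinations of the $O(s^2)$ vectors $u_j^{(\ell)}$ with $j+\ell\le s-1$. Differentiating $Lu_i=C\,\theta^i a$ as you suggest yields one polynomial-coefficient equation per pair $(j,\ell)$, but the resulting $\hat M$ then carries $s(s+1)/2$ copies of $L$ on its diagonal, giving only $\varphi_\ell(K^\ast)\mid\Delta^{s(s+1)/2}$. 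The recurrence you derive, $Lu_{i+1}=Lu_i'+(CB+I_m)u_i$, links $u_{i+1}$ to both $u_i$ and $u_i'$; it does not close on $s$ auxiliary states, and augmenting with derivatives reintroduces the quadratic blow-up you were trying to avoid.

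What the paper does differently is to first separate the differential and rational parts via the operator identity $\theta^s=\sum_{j}\binom{s}{j}\theta^j(I_n)\,\partial_x^{\,s-j}$, so that all derivatives act on the \emph{polynomial} vector $a$ and the rational content sits entirely in the matrices $\theta^j(I_n)$. The closed form $\theta^j(I_n)=BL^{-1}\prod_{\ell=1}^{j-1}[(CB-\ell I_m)L^{-1}]\,C$ (Lemma~\ref{lemma_thetaj_in_I}) then gives a recursion $\mathcal L_i=P_i+(CB-iI_m)L^{-1}\mathcal L_{i+1}$ with $P_i$ polynomial: exactly one factor $L^{-1}$ is peeled off at each step, so Proposition~\ref{prop:det_den_sum_prod} applied $s_r$ times yields $\varphi_\ell(K)\mid\Delta^{s_r}$. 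In your language, the correct auxiliary states are the $\mathcal L_i$, not the $u_i$; they are engineered so that the derivatives have already been pushed onto $a$ and never reappear.
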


Note that Proposition~\ref{prop:det_den_pseudo_krylov_matrix} is somehow
unexpected. Indeed, if one expands an arbitrary minor of $K$ of order $r$
by taking $\Delta$ as denominator of $T$, one expects this minor to have
denominator $\Delta^{s_1 + \cdots + s_r}$. It is not a priori clear why there should be
such simplifications in the denominator of the minors. Besides,
as for Theorem~\ref{thm:bound_realisation}, experiments suggest that
Proposition~\ref{prop:det_den_pseudo_krylov_matrix} still holds
if the matrix $T$ is no longer strictly proper.

\subsection{Proof of
  Proposition~\ref{prop:det_den_pseudo_krylov_matrix}}
\label{sec:pf_lemma_det_den_pseudo_krylov_matrices}

Let $\mathcal O = k(x)\left<\DP_x\right>$ be the ring of linear
differential polynomials with rational function
coefficients. Then, $\theta = I_n * \DP_x + T$
can be seen as a $n \times n$ matrix with
coefficients in
$\mathcal O$~\cite{jacobson1937pseudolineartransformations}.
Multiplication of
matrices of operators, denoted by $*$, is defined in the usual way by
\begin{align*}
  \DP_x * R = R  * \DP_x  + R',
\end{align*}
for any rational matrix $R \in k(x)^{n \times n}$. Next for any matrix
$U \in k(x)^{n \times p}$ with columns $u_1, \dots, u_p$,
let $\theta(U)$ be the result of applying
$\theta$ to each column of $U$, namely
\begin{align*}
  \theta(U) =
  \begin{bmatrix}
    \theta u_1 & \cdots & \theta u_p
  \end{bmatrix}.
\end{align*}
A simple reasoning by induction proves the following. 
\begin{lemma}
  \label{lem:binom_formula_theta}
  For all $s \ge 0$,
  $ \theta^s = \sum_{j=0}^s \binom{s}{j} \theta^{j}(I_n) * \DP_x^{s-j}
  \in \mathcal O^{n\times n}$.
\end{lemma}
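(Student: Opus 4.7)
The plan is a straightforward induction on $s$, following exactly the pattern of the classical binomial formula, but respecting the non-commutation rule $\DP_x * R = R * \DP_x + R'$.

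For the base case $s=0$, both sides equal $I_n$. For the inductive step, assuming the formula holds for $s$, I would write $\theta^{s+1} = \theta * \theta^s = (I_n * \DP_x + T) * \sum_{j=0}^{s} \binom{s}{j} \theta^j(I_n) * \DP_x^{s-j}$ and push the leading $\DP_x$ past each $\theta^j(I_n)$ using the commutation rule. This yields, for each $j$,
\begin{equation*}
(I_n * \DP_x + T) * \theta^j(I_n) = \theta^j(I_n) * \DP_x + (\theta^j(I_n))' + T \cdot \theta^j(I_n),
\end{equation*}
and the definition of the column-wise action of $\theta$ collapses the last two terms to $\theta^{j+1}(I_n)$. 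Plugging this back in, $\theta^{s+1}$ splits as a sum of two pieces: $\sum_{j=0}^{s} \binom{s}{j} \theta^j(I_n) * \DP_x^{s+1-j}$ and $\sum_{j=0}^{s} \binom{s}{j} \theta^{j+1}(I_n) * \DP_x^{s-j}$.

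The remaining step is a re-indexing of the second sum (set $i = j+1$) followed by Pascal's rule $\binom{s}{j} + \binom{s}{j-1} = \binom{s+1}{j}$ to combine the interior terms, while the endpoints $j=0$ and $j=s+1$ are furnished by a single summand from each sum. This yields exactly $\sum_{j=0}^{s+1} \binom{s+1}{j} \theta^j(I_n) * \DP_x^{s+1-j}$.

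There is no real obstacle here; the only point of care is keeping track of the fact that $\theta^j(I_n)$ denotes a rational matrix (the result of applying $\theta$ column-wise $j$ times to $I_n$), while the outer $\theta^j$ and $*$-multiplications happen in the non-commutative ring $\mathcal O^{n \times n}$. Once this distinction is fixed, the computation mirrors the usual proof of the Leibniz/binomial formula.
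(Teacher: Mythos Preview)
Your proposal is correct and follows exactly the approach the paper indicates: the paper merely states ``A simple reasoning by induction proves the following'' without giving details, and your argument is precisely that induction carried out in full. The key identity $(I_n * \DP_x + T) * \theta^j(I_n) = \theta^j(I_n) * \DP_x + \theta^{j+1}(I_n)$ is exactly right, since by definition $\theta(U) = U' + TU$ column-wise.
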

%
%
We now compute a convenient expression for the matrices $\theta^j(I_n)$.
For $1 \le i \le j$, we
denote by $\Lambda_i^j$ the product
\begin{align*}
  \Lambda^j_i = \prod_{\ell = i}^{j-1} \left[(CB -\ell I_m)L^{-1}\right],
\end{align*}
with $L = xI_m - A$ and $A,B,C$ the scalar matrices
from~(\ref{eq:state-space_realisation_strictly_proper}).
We adopt the convention that $\Lambda_{j}^j = I_m$.
\begin{lemma}
  \label{lem_formula_Lambda_ij}
  For $1 \le i \le j$, 
  $\Lambda^{j+1}_i - \left(\Lambda^j_i \right)' = (CB- i I_m) L^{-1}
  \Lambda^{j}_i $.
\end{lemma}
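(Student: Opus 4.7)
The plan is to prove the identity by induction on $k := j - i \ge 0$. The base case $k = 0$ (where $j = i$) is immediate: by the convention $\Lambda^i_i = I_m$, its derivative vanishes, and $\Lambda^{i+1}_i = (CB - iI_m)L^{-1}$, which equals $(CB - iI_m)L^{-1}\Lambda^i_i$ as required.

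For the inductive step, I would exploit the left-peeling factorisation $\Lambda^j_i = (CB - iI_m)L^{-1}\Lambda^j_{i+1}$, and similarly $\Lambda^{j+1}_i = (CB - iI_m)L^{-1}\Lambda^{j+1}_{i+1}$, made available by the product convention. Applying the Leibniz rule to $(\Lambda^j_i)'$ in this factored form, and using that $L = xI_m - A$ has constant derivative $L' = I_m$, hence $(L^{-1})' = -L^{-2}$, one rearranges to
\[
\Lambda^{j+1}_i - (\Lambda^j_i)' = (CB - iI_m)L^{-1}\bigl[\Lambda^{j+1}_{i+1} - (\Lambda^j_{i+1})'\bigr] + (CB - iI_m)L^{-2}\Lambda^j_{i+1}.
\]
Invoking the induction hypothesis for the pair $(j, i+1)$, which sits at level $k-1$, replaces the bracketed difference by $(CB - (i+1)I_m)L^{-1}\Lambda^j_{i+1}$. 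What remains is purely symbolic: factor $(CB - iI_m)L^{-1}$ on the left and $\Lambda^j_{i+1}$ on the right, then collapse the middle using the elementary telescoping $(CB - (i+1)I_m)L^{-1} + L^{-1} = (CB - iI_m)L^{-1}$, obtained by pulling $L^{-1}$ out on the right from $[(CB - (i+1)I_m) + I_m]L^{-1}$. This yields $(CB - iI_m)L^{-1}\cdot(CB - iI_m)L^{-1}\Lambda^j_{i+1} = (CB - iI_m)L^{-1}\Lambda^j_i$, closing the induction.

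The only delicate point is the non-commutativity of matrices: the peeling $\Lambda^j_i = (CB - iI_m)L^{-1}\Lambda^j_{i+1}$ must be performed from the left, and the scalar shifts $CB - \ell I_m$ cannot in general be commuted past $L^{-1}$, so one must carefully track left versus right factors throughout the Leibniz expansion and the final simplification. Once this bookkeeping is respected, the computation is mechanical and I anticipate no genuine obstacle; this preparatory identity is naturally set up to feed into a formula expanding $\theta^s(I_n)$ in terms of the $\Lambda^j_i$'s, which is presumably the next step toward Proposition~\ref{prop:det_den_pseudo_krylov_matrix}.
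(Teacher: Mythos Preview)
Your proposal is correct and follows essentially the same approach as the paper: induction on $j-i$ (the paper phrases it as descending induction on $i$ with $j$ fixed), using the left-peeling $\Lambda^j_i = (CB-iI_m)L^{-1}\Lambda^j_{i+1}$, the Leibniz rule with $(L^{-1})' = -L^{-2}$, and the same final collapse $(CB-(i+1)I_m)L^{-1} + L^{-1} = (CB-iI_m)L^{-1}$. The bookkeeping concerns you flag about non-commutativity are handled correctly, and your anticipation that this feeds into a formula for $\theta^j(I_n)$ is exactly right.
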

\begin{proof}
  Let $j \ge 1$, the result holds for $i =j$ since
  $\Lambda^{j+1}_j = (CB-jI_m)L^{-1}$ and $\Lambda_j^j = I_m$. Suppose it
  holds for $1 < i+1 \le j$ and recall that $(L^{-1})' = - L^{-2}$.
  Then, we obtain
  \begin{align*}
    & \Lambda^{j+1}_i - \left( \Lambda^j_i \right)' = (CB-i I_m)\left[ L^{-1}\Lambda^{j+1}_{i+1} -
      (L^{-1}\Lambda^j_{i+1})' \right] \\
    &= (CB-iI_m)L^{-1}\left[ \Lambda^{j+1}_{i+1} - (\Lambda^{j}_{i+1})'  + L^{-1}\Lambda^{j}_{i+1} \right].
  \end{align*}
  And by the induction hypothesis, the following concludes the proof,
  \begin{align*}
    &\Lambda^{j+1}_{i+1} - (\Lambda^{j}_{i+1})'  + L^{-1}\Lambda^{j}_{i+1}
      = (CB - (i+1)I_m)L^{-1} \Lambda^{j}_{i+1}  + L^{-1}\Lambda^{j}_{i+1} \\
    &= (CB -i I_m )L^{-1}\Lambda^{j}_{i+1} = \Lambda^j_i. \qedhere
  \end{align*}
\end{proof}

\begin{lemma}
  \label{lemma_thetaj_in_I}
  For $j \ge 1$, $\theta^j(I_n) = BL^{-1}\Lambda^j_1 C$.
\end{lemma}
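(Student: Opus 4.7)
The proof plan is a direct induction on $j \ge 1$, using the strict-properness assumption to write $T = BL^{-1}C$ as in~(\ref{eq:state-space_realisation_strictly_proper}), where $A, B, C$ are \emph{scalar} matrices (so that they are killed by $\partial_x$) and $L = xI_m - A$.

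For the base case $j = 1$: since $I_n$ has zero derivative, $\theta(I_n) = T \cdot I_n = BL^{-1}C$; and by the empty-product convention $\Lambda^1_1 = I_m$, so this equals $BL^{-1}\Lambda^1_1 C$, as required.

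For the inductive step, assume $\theta^j(I_n) = BL^{-1}\Lambda^j_1 C$ and apply $\theta = \partial_x + T$ once more. Because $B$ and $C$ are constant, the derivative collapses to $B (L^{-1}\Lambda^j_1)' C$, while the $T$-term gives $BL^{-1}C \cdot BL^{-1}\Lambda^j_1 C$. Combining them and using the identity $(L^{-1})' = -L^{-2}$, one factors $BL^{-1}$ on the left and $C$ on the right to obtain
\begin{align*}
\theta^{j+1}(I_n) = BL^{-1}\bigl[(\Lambda^j_1)' + (CB - I_m) L^{-1} \Lambda^j_1\bigr] C.
\end{align*}
It then suffices to recognise the bracket as $\Lambda^{j+1}_1$, which is exactly the content of Lemma~\ref{lem_formula_Lambda_ij} applied at $i = 1$: $\Lambda^{j+1}_1 - (\Lambda^j_1)' = (CB - I_m) L^{-1} \Lambda^j_1$.

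The only even mildly subtle point is the bookkeeping that matches the $-I_m$ produced by differentiating $L^{-1}$ with the specific shift $-1 \cdot I_m$ in the leftmost factor of $\Lambda^{j+1}_1$. This is not really an obstacle: the products $\Lambda^j_i$ and the sequence of shifts $-\ell I_m$ were designed precisely so that this differentiation identity telescopes correctly, and Lemma~\ref{lem_formula_Lambda_ij} encapsulates exactly this matching. Everything else is routine manipulation of commuting scalar and rational matrix factors.
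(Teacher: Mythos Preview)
Your proof is correct and follows essentially the same approach as the paper's: both argue by induction on $j$, handle the base case via $\theta(I_n)=T=BL^{-1}C$ and $\Lambda^1_1=I_m$, and in the inductive step expand $\theta^{j+1}(I_n)=(\partial_x+T)\bigl(BL^{-1}\Lambda^j_1 C\bigr)$ using $(L^{-1})'=-L^{-2}$, factor out $BL^{-1}$ and $C$, and invoke Lemma~\ref{lem_formula_Lambda_ij} at $i=1$ to identify the bracket with $\Lambda^{j+1}_1$.
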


\begin{proof}
  For $j = 1$, $\theta(I_n) = T = BL^{-1}C$, so the result
  holds.
  Suppose the result holds for $j \ge 1$. Then,
  $\theta^{j+1}(I_n) = (T + \DP_x)(\theta^j(I_n)) = BL^{-1}C B
  L^{-1}\Lambda_1^j C - BL^{-2}\Lambda_1^j C +
  BL^{-1}(\Lambda^j_1)'C$.  By Lemma~\ref{lem_formula_Lambda_ij},
  \begin{align*}
    &\theta^{j+1}(I_n) = BL^{-1}\left( (CB - I_m) L^{-1}\Lambda_1^j + (\Lambda^j_1)' \right)C
    = BL^{-1} \Lambda^{j+1}_1C. \qedhere
  \end{align*}
\end{proof}

Now by Lemmas~\ref{lem:binom_formula_theta}
and~\ref{lemma_thetaj_in_I}, for $s \ge 0$, one can write
\begin{align*}
  \theta^s a = \sum_{j=0}^s  \theta^j(I_n) \binom s j a^{(s-j)} = a^{(s)}
  + BL^{-1}\left[ \sum_{j=1}^s \Lambda^j_1 C \binom s j a^{(s-j)} \right]. 
\end{align*}
For $i \ge 1$ and $s \ge 0$, let $u_i^s = C \binom s i a^{(s-i)}$ and
$\lambda_i^s = \sum_{j=i}^s \Lambda_i^j u_j^s$.  Note that $u_i^s = 0$
and $\lambda_i^s = 0$ when $i > s$.  Also $u_i^s$ is a vector of
polynomials.  So we have $\theta^s a = a^{(s)} + BL^{-1}\lambda_1^s$.
Hence the matrix $K$ in~(\ref{eq:pseudo_krylov_matrix})
can be written as
\begin{align*}
  K =
  \begin{bmatrix}
    a^{(s_1)} & \cdots & a^{(s_r)}
  \end{bmatrix}
  + BL^{-1} \mathcal{L}_1,
\end{align*}
with
$ \mathcal L_i = \begin{bmatrix} \lambda_i^{s_1} & \cdots &
  \lambda_i^{s_r}
\end{bmatrix}$ for $i \ge 1$.
In the above sum, the left term is a polynomial matrix.
So by Proposition~\ref{prop:det_den_sum_prod},
$\varphi_\ell(K)$ divides $\varphi_\ell(L^{-1})  \varphi_\ell(\mathcal L_1)$.
Therefore $\varphi_\ell(K)$ divides $\Delta    \varphi_\ell(\mathcal L_1)$.
It only remains to prove that $\varphi_\ell(\mathcal L_1)$ divides $\Delta^{s_r -1}$.
This is done in the following lemma, which
completes the proof of Proposition~\ref{prop:det_den_pseudo_krylov_matrix}.

\begin{lemma}
  \label{lem:det_den_Li}
  For $1 \le i \le s_r$ and $\ell \ge 0$, $\varphi_\ell(\mathcal L_i)$
  divides $\Delta^{s_r -i}$.
\end{lemma}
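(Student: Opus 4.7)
The plan is to establish a simple matrix recurrence relating $\mathcal L_i$ to $\mathcal L_{i+1}$, and then to run a descending induction on $i$ starting from $i = s_r$, using the fact that $\varphi_\ell(L^{-1})$ divides $\Delta$ (which is already noted at the start of Section~\ref{sec:deg_bound}).

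The key identity I would first prove is
\begin{equation*}
\mathcal L_i = U_i + (CB - iI_m) L^{-1} \mathcal L_{i+1},
\end{equation*}
where $U_i = [u_i^{s_1} \ \cdots \ u_i^{s_r}]$ is a polynomial matrix (since the $u_j^s$ are polynomial vectors). This follows column-by-column from the definition of $\lambda_i^s$: using the factorisation $\Lambda_i^j = (CB - iI_m) L^{-1} \Lambda_{i+1}^j$ valid for $j \ge i+1$, together with $\Lambda_i^i = I_m$, one immediately gets $\lambda_i^s = u_i^s + (CB - iI_m) L^{-1} \lambda_{i+1}^s$ for every $s$ (the convention $\lambda_i^s = 0$ when $i > s$ makes this hold uniformly).

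For the base case $i = s_r$, every column $\lambda_{s_r}^{s_j}$ is zero when $s_j < s_r$ and equals the polynomial vector $u_{s_r}^{s_j}$ when $s_j = s_r$, so $\mathcal L_{s_r}$ is polynomial and $\varphi_\ell(\mathcal L_{s_r}) = 1 = \Delta^0$. For the induction step, I would apply Proposition~\ref{prop:det_den_sum_prod} twice: first to the sum, using that $\varphi_\ell(U_i) = 1$, giving
\begin{equation*}
\varphi_\ell(\mathcal L_i) \mid \varphi_\ell\bigl((CB - iI_m) L^{-1} \mathcal L_{i+1}\bigr);
\end{equation*}
then to the product, using that $CB - iI_m$ is a constant matrix and hence has $\varphi_\ell \equiv 1$, giving
\begin{equation*}
\varphi_\ell\bigl((CB - iI_m) L^{-1} \mathcal L_{i+1}\bigr) \mid \varphi_\ell(L^{-1}) \cdot \varphi_\ell(\mathcal L_{i+1}).
\end{equation*}
Combining these with $\varphi_\ell(L^{-1}) \mid \Delta$ and the induction hypothesis $\varphi_\ell(\mathcal L_{i+1}) \mid \Delta^{s_r - i - 1}$ yields $\varphi_\ell(\mathcal L_i) \mid \Delta^{s_r - i}$.

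There is no real obstacle; the content is entirely algebraic and the only thing to be careful about is the correct bookkeeping of the recurrence at the boundary $j = i$ (so that the polynomial part $U_i$ is isolated cleanly) and the verification that applying Proposition~\ref{prop:det_den_sum_prod} to a constant matrix factor contributes nothing to $\varphi_\ell$. Both are routine once the recurrence $\mathcal L_i = U_i + (CB - iI_m) L^{-1} \mathcal L_{i+1}$ is written down.
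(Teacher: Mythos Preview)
Your proposal is correct and follows essentially the same route as the paper: both establish the recurrence $\mathcal L_i = U_i + (CB - iI_m)L^{-1}\mathcal L_{i+1}$ column-by-column from the definition of $\lambda_i^s$, handle the base case $i = s_r$ by observing $\mathcal L_{s_r}$ is polynomial, and then descend by applying Proposition~\ref{prop:det_den_sum_prod} together with $\varphi_\ell(L^{-1}) \mid \Delta$. The only cosmetic difference is that you spell out the two applications of Proposition~\ref{prop:det_den_sum_prod} (sum, then product with the constant factor) separately, whereas the paper compresses them into a single line.
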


\begin{proof}
  $\mathcal L_{s_r}$ is a polynomial matrix, so
  $\varphi_\ell(\mathcal L_{s_r}) = 1$ for all $\ell \ge 0$.  Suppose
  the result holds for $1 < i+1 \le s_r$. For all $s \ge 0$, we have
  \begin{align*}
    \lambda_i^s = \sum_{j=i}^s \Lambda_i^j u_j^s = u_i^s +  \sum_{j=i+1}^s \Lambda_i^j u_j^s
    = u_i^s  + (CB - iI_m)L^{-1}\lambda_{i+1}^s.
  \end{align*}
  Thus, one can write
  \begin{align*}
    \mathcal L_i =
    \begin{bmatrix}
      u_i^{s_1} & \cdots & u_i^{s_r} 
    \end{bmatrix} + (CB-iI_m)L^{-1} \mathcal L_{i+1}.
  \end{align*}
  Since the left term is a polynomial matrix, by
  Proposition~\ref{prop:det_den_sum_prod},
  $\varphi_\ell(\mathcal L_i)$ divides
  $\varphi_\ell(L^{-1})\varphi_{\ell}(\mathcal L_{i+1})$.  Hence by
  induction it divides $\Delta^{s_r -i}$.
\end{proof}

\subsection{Proof of the degree bound}
\label{sec:pf_main_result}

We are now ready to establish the degree bound stated in Theorem~\ref{thm:bound_realisation}.
The beginning of the proof does not differ from the proof
of Theorem~\ref{thm:direct_bound_cramer},
except that we take $\Delta$ as a denominator of $T = XM^{-1}Y$.

\begin{proof}[Proof of Theorem~\ref{thm:bound_realisation}]
  We have $T = XM^{-1}Y$ so one can write $T$ with denominator
  $\Delta = \det M$.  And because $T$ is strictly proper, $\Delta T$
  is a polynomial matrix of degree $< \delta = \deg \det M$.  Thus
  similarly as in the proof of Theorem~\ref{thm:direct_bound_cramer}
  in Section~\ref{sec:direct_bound}, one can write $\theta^i a$ for
  all $i \ge 0$ as $b_i / \Delta^i$ with $b_i \in k[x]^n$ of degree at
  most $d_a + i(\delta-1)$.

  Let us again study the solution $\nu \in k(x)^\rho$ of the linear
  system $K \nu = - \theta^\rho a$ with $K =
  \begin{bmatrix}
    a & \theta a & \cdots & \theta^{\rho-1}a
  \end{bmatrix} \in k(x)^{n \times \rho}$.
  We also denote
  $K^* =
  \begin{bmatrix}
    K & \theta^{\rho } a
  \end{bmatrix} \in k(x)^{n \times \rho + 1}$.
  As in the proof of Theorem~\ref{thm:direct_bound_cramer}, applying Cramer's
  rule to this linear system shows that for all $0 \le i < \rho$,
  $\nu_i = \pm \mathfrak m_i / \mathfrak m  $ with $\mathfrak m_i$
  (resp.~$\mathfrak m$) a $\rho \times \rho$
  minor of $K^*$ (resp.~$K$).


  By a direct expansion, we get $\mathfrak m = p / \Delta^{\rho(\rho-1)/2}$
  with $p \in k[x]$ of degree at most $\rho d_a + 1/2\cdot \rho(\rho-1)(\delta-1)$.
  However, by Proposition~\ref{prop:det_den_pseudo_krylov_matrix},
  it can be written with
  denominator $\Delta^{\rho-1}$. Thus we have
  \begin{align*}
    \mathfrak m  = \frac p {\Delta^{\rho(\rho-1)/2}} = \frac q {\Delta^{\rho-1}}, 
  \end{align*}
  with $q = p / \Delta^{(\rho-1)(\rho-2)/2}$ \emph{a polynomial} of degree at most
  \begin{align*}
    \deg q \le \rho d_a + (\rho-1)\delta -1/2\cdot \rho(\rho-1).
  \end{align*}

  Similarly, $\mathfrak m_i$ can be written
  $p_i / \Delta^{\rho(\rho+1)/2 - i}$ with $p_i \in k[x]$ of degree
  at most $\rho d_a + (1/2 \cdot \rho(\rho+1) -i) (\delta-1)$.
  Morever, by Proposition~\ref{prop:det_den_pseudo_krylov_matrix},
  it can be written with denominator $\Delta^\rho$.
  So we have
  \begin{align*}
    \mathfrak m_i = \frac{p_i}{\Delta^{\rho(\rho+1)/2 - i}} = \frac{q_i}{\Delta^{\rho}},
  \end{align*}
  with $q_i = p_i / \Delta^{\rho(\rho-1)/2 - i}$ \emph{a polynomial} of degree at most
  \begin{align*}
    \deg q_i \le \rho d_a + \rho \delta - (1/2 \cdot \rho(\rho+1) -i).
  \end{align*}

  Finally for $0 \le i < \rho$, $\nu_i = \pm q_i / (\Delta q)$.
  And thus $\eta_i = \Delta q \cdot \nu_i = \pm q_i$ for $0 \le i < \rho$
  and $\eta_\rho = \Delta q$, whence the result.
\end{proof}

\section{Perspectives}
\label{sec:perspectives}

Theorem~\ref{thm:bound_realisation} allows to establish precise degree
bounds in several concrete problems under some conditions (genericity,
point at infinity not irregular).  For future work, we aim at proving
a similar result 
without the
strict properness assumption.  This would yield bounds that hold
unconditionally in our four instances.  Also, there are other
problems that are instances of Problem~\ref{pb:lin_rel_pseudo_lin_map}
for which we could also derive bounds. This is for example the case
for the composition of algebraic functions and D-finite
functions~\cite{kauers2017bounds_algsubs}. Finally, the long-term goal is to
exploit the unified viewpoint we enlighted for the design of new
efficient algorithms. We notably intend to have algorithms
solving Problem~\ref{pb:lin_rel_pseudo_lin_map} whose
runtime is sensitive to the \emph{size} of the realisation of $T$
we have in input.


\balance

\begin{acks}
  The author thanks Alin Bostan, Bruno Salvy and Gilles Villard for
  helpful discussions and comments. This work has been supported by
  the French–Austrian project EAGLES (ANR-22-CE91-0007 \& FWF
  I6130-N) 
  and Agence nationale de la recherche (ANR) project NuSCAP
  (ANR-20-CE48-0014).
\end{acks}

	
\bibliographystyle{abbrv} \bibliography{biblio}

\begin{thebibliography}{10}

\bibitem{abeloeuvres}
N.~H. Abel.
\newblock {{\OE}uvres completes. Tome II, {\'E}ditions J. Gabay, Sceaux, 1992,
  Reprint of the second (1881) edition}.

\bibitem{adjamagbo1988cyclic}
K.~Adjamagbo.
\newblock Sur l'effectivit\'e{} du lemme du vecteur cyclique.
\newblock {\em C. R. Acad. Sci. Paris S\'er. I Math.}, 306(13):543--546, 1988.

\bibitem{almkvist1990method}
G.~Almkvist and D.~Zeilberger.
\newblock The method of differentiating under the integral sign.
\newblock {\em J. Symbolic Comput.}, 10(6):571--591, 1990.

\bibitem{amitsur1954diffpoly_divalgebra}
A.~S. Amitsur.
\newblock Differential polynomials and division algebras.
\newblock {\em Ann. of Math. (2)}, 59:245--278, 1954.

\bibitem{barkatou1993algo_uncoupling}
M.~A. Barkatou.
\newblock An algorithm for computing a companion block diagonal form for a
  system of linear differential equations.
\newblock {\em Appl. Algebra Engrg. Comm. Comput.}, 4(3):185--195, 1993.

\bibitem{bostan2010complexity}
A.~Bostan, S.~Chen, F.~Chyzak, and Z.~Li.
\newblock Complexity of creative telescoping for bivariate rational functions.
\newblock In {\em I{SSAC} 2010---{P}roceedings of the 2010 {I}nternational
  {S}ymposium on {S}ymbolic and {A}lgebraic {C}omputation}, pages 203--210.
  ACM, New York, 2010.

\bibitem{bostan2013cyclic}
A.~Bostan, F.~Chyzak, and E.~de~Panafieu.
\newblock Complexity estimates for two uncoupling algorithms.
\newblock In {\em I{SSAC} 2013---{P}roceedings of the 2013 {I}nternational
  {S}ymposium on {S}ymbolic and {A}lgebraic {C}omputation}, pages 85--92. ACM,
  New York, 2013.

\bibitem{bostan2007algeqtodiffeq}
A.~Bostan, F.~Chyzak, B.~Salvy, G.~Lecerf, and E.~Schost.
\newblock Differential equations for algebraic functions.
\newblock In {\em I{SSAC} 2007---{P}roceedings of the 2007 {I}nternational
  {S}ymposium on {S}ymbolic and {A}lgebraic {C}omputation}, pages 25--32. ACM,
  New York, 2007.

\bibitem{bostan2012LCLM}
A.~Bostan, F.~Chyzak, B.~Salvy, and Z.~Li.
\newblock Fast computation of common left multiples of linear ordinary
  differential operators.
\newblock In {\em I{SSAC} 2012---{P}roceedings of the 2012 {I}nternational
  {S}ymposium on {S}ymbolic and {A}lgebraic {C}omputation}, pages 99--106. ACM,
  New York, 2012.

\bibitem{bronstein1996introduction}
M.~Bronstein and M.~Petkov{\v s}ek.
\newblock An introduction to pseudo-linear algebra.
\newblock {\em Theoret. Comput. Sci.}, 157(1):3--33, 1996.

\bibitem{churchill2002cyclicvectors}
R.~C. Churchill and J.~J. Kovacic.
\newblock Cyclic vectors.
\newblock In {\em Differential Algebra and Related Topics}, pages 191--218.
  World Sci. Publ., River Edge, NJ, 2002.

\bibitem{cockle1861transcendental}
J.~Cockle.
\newblock On transcendental and algebraic solution.
\newblock {\em Philosophical Magazine XXI}, pages 379--383, 1861.

\bibitem{coppel1974matratfun}
W.~A. Coppel.
\newblock Matrices of rational functions.
\newblock {\em Bull. Austral. Math. Soc.}, 11:89--113, 1974.

\bibitem{hermite1872integration}
C.~Hermite.
\newblock Sur l'int{\'e}gration des fractions rationnelles.
\newblock In {\em Annales scientifiques de l'{\'E}cole normale sup{\'e}rieure},
  volume~1, pages 215--218, 1872.

\bibitem{horowitz1971algorithms}
E.~Horowitz.
\newblock Algorithms for partial fraction decomposition and rational function
  integration.
\newblock In {\em Proceedings of the second ACM symposium on Symbolic and
  algebraic manipulation}, pages 441--457, 1971.

\bibitem{jacobson1937pseudolineartransformations}
N.~Jacobson.
\newblock Pseudo-linear transformations.
\newblock {\em Ann. of Math. (2)}, 38(2):484--507, 1937.

\bibitem{kailath1980linear_systems}
T.~Kailath.
\newblock {\em Linear systems}.
\newblock Prentice Hall Information and System Sciences Series. Prentice-Hall,
  Inc., Englewood Cliffs, NJ, 1980.

\bibitem{kauers2014bounds_Dfinite_closure}
M.~Kauers.
\newblock Bounds for {D}-finite closure properties.
\newblock In {\em I{SSAC} 2014---{P}roceedings of the 2014 {I}nternational
  {S}ymposium on {S}ymbolic and {A}lgebraic {C}omputation}, pages 288--295.
  ACM, New York, 2014.

\bibitem{Kauers23}
M.~Kauers.
\newblock {\em D-finite functions}, volume~30 of {\em Algorithms and
  Computation in Mathematics}.
\newblock Springer, Cham, 2023.

\bibitem{kauers2017bounds_algsubs}
M.~Kauers and G.~Pogudin.
\newblock Bounds for substituting algebraic functions into {D}-finite
  functions.
\newblock In {\em I{SSAC} 2017---{P}roceedings of the 2017 {I}nternational
  {S}ymposium on {S}ymbolic and {A}lgebraic {C}omputation}, pages 245--252.
  ACM, New York, 2017.

\bibitem{keller-gehrig_1985_charpoly}
W.~Keller-Gehrig.
\newblock Fast algorithms for the characteristic polynomial.
\newblock {\em Theoret. Comput. Sci.}, 36(2-3):309--317, 1985.

\bibitem{neiger_pernet_villard_2024_krylov_iterates}
V.~Neiger, C.~Pernet, and G.~Villard.
\newblock Computing {K}rylov iterates in the time of matrix multiplication.
\newblock In {\em I{SSAC}'24---{P}roceedings of the 2024 {I}nternational
  {S}ymposium on {S}ymbolic and {A}lgebraic {C}omputation}, pages 419--428.
  ACM, New York, 2024.

\bibitem{poole1936introduction}
E.~G.~C. Poole.
\newblock {\em Introduction to the theory of linear differential equations}.
\newblock Oxford at the Clarendon Press, 1936.

\bibitem{salvy2019survey}
B.~Salvy.
\newblock Linear differential equations as a data structure.
\newblock {\em Found. Comput. Math.}, 19(5):1071--1112, 2019.

\bibitem{salvy1994gfun}
B.~Salvy and P.~Zimmermann.
\newblock {GFUN: A Maple Package for the Manipulation of Generating and
  Holonomic Functions in One Variable}.
\newblock {\em ACM Transactions on Mathematical Software (TOMS)},
  20(2):163--177, 1994.

\bibitem{stanley1980dfinite}
R.~P. Stanley.
\newblock Differentiably finite power series.
\newblock {\em European J. Combin.}, 1(2):175--188, 1980.

\bibitem{vanderHoeven2016complexity_skew_arith}
J.~van~der Hoeven.
\newblock On the complexity of skew arithmetic.
\newblock {\em Appl. Algebra Engrg. Comm. Comput.}, 27(2):105--122, 2016.

\bibitem{zeilberger1990creativetelesc}
D.~Zeilberger.
\newblock A holonomic systems approach to special functions identities.
\newblock {\em J. Comput. Appl. Math.}, 32(3):321--368, 1990.

\end{thebibliography}

\end{document}